\newtheorem{theorem}{\emph{\underline{Theorem}}}
\newtheorem{definition}{\emph{\underline{Definition}}}
\newtheorem{lemma}{\emph{\underline{Lemma}}}
\newtheorem{example}{\bf Example}
\newtheorem{remark}{\bf \emph{\underline{Remark}}}
\def\({\left(}
\def\){\right)}
\def\b0{{\mathbf{0}}}
\newcommand{\nn}{\nonumber}
\begin{document}
\captionsetup[figure]{name={Fig.}}
\title{\huge 
Rotatable Antenna Aided Mixed Near-Field and Far-Field Communications in the Upper Mid-Band: Interference Analysis and Joint Optimization\vspace{-3pt}}

\author{Yunpu~Zhang,
Changsheng~You,~\IEEEmembership{Member,~IEEE}, Hing Cheung So,~\IEEEmembership{Fellow,~IEEE},\\ Dusit Niyato,~\IEEEmembership{Fellow,~IEEE} and Yonina C. Eldar,~\IEEEmembership{Fellow,~IEEE}
\thanks{
Y. Zhang and H. C. So are with the Department of Electrical Engineering, City University of Hong Kong, Hong Kong (e-mail:
yunpu.zhang@my.cityu.edu.hk, hcso@ee.cityu.edu.hk).  
C. You is with the Department of Electronic and Electrical Engineering, Southern University of Science and Technology (SUSTech), Shenzhen
518055, China (e-mail: youcs@sustech.edu.cn). D. Niyato is with the College of Computing and
Data Science, Nanyang Technological University, Singapore 639798 (email:dniyato@ntu.edu.sg). Yonina
C. Eldar is with the Faculty of Math and CS, Weizmann Institute of Science,
Rehovot 76100, Israel (e-mail:
yonina.eldar@weizmann.ac.il).  
 (\emph{Corresponding
 author: Changsheng~You.)}} \vspace{-33pt}}

\maketitle
\begin{abstract}
In this paper, we propose to leverage rotatable antennas (RAs) for improving the communication performance in mixed near-field and far-field communication systems by exploiting a new spatial degree-of-freedom (DoF) offered by antenna rotation to mitigate complex near-field interference and mixed-field interference. Specifically, we investigate an RA-enabled mixed-field downlink communication system with a \emph{modular} architecture, where the base station (BS) is composed of multiple RA subarrays serving multiple near-field users in the presence of legacy far-field users.
We formulate an optimization problem to maximize the sum-rate of the near-field users by 
jointly optimizing the power allocation and rotation angles of all subarrays at the BS. To gain useful insights into the effect of RAs on mixed-field communications, we first analyze a special case where all subarrays share the same rotation angle and obtain closed-form expressions for the \emph{rotation-dependent} normalized near-field interference and the \emph{rotation-dependent} normalized mixed-field interference using the Fresnel integrals. We then analytically reveal that array rotation effectively suppresses both interference types, thereby significantly enhancing mixed-field communication performance. For the general case involving \emph{subarray-wise} rotation, we propose an efficient \emph{double-layer} algorithm to obtain a high-quality solution, where the inner layer optimizes power allocation using the successive convex approximation (SCA), while the outer layer determines the rotation angles of all subarrays via particle swarm optimization (PSO). Finally, numerical results highlight the significant performance gains achieved by RAs over conventional fixed-antenna systems and demonstrate the effectiveness of the developed joint design compared to benchmark schemes.
\end{abstract}
\begin{IEEEkeywords}
Rotatable antenna (RA), upper mid-band, mixed near-field and far-field communications, interference analysis.
\end{IEEEkeywords}
\vspace{-6pt}
\section{Introduction}


To fulfill the ever-growing demands of future sixth-generation (6G) wireless networks for ultra-high data rates, hyper-reliability, and extremely low latency \cite{ 10054381,you2024next}, communication systems must undergo substantial evolution. At the core of this transformation, and critical to shaping 6G and potentially next-generation (NextG) standards, are advancements in multiple-input multiple-output (MIMO) technologies, particularly the deployment of extremely large-scale MIMO (XL-MIMO), along with the strategic utilization of new spectrum bands such as frequency range $3$ (FR3), which spans $7$ GHz to $24$ GHz \cite{10559933}.
These key enabling technologies are anticipated to deliver substantial benefits for 6G, enhancing end-user experiences through improved throughput, extended coverage, and higher energy efficiency. Consequently, they are expected to significantly influence future standardization efforts within bodies like the 3rd Generation Partnership Project (3GPP) \cite{tang2025preliminary}.
These combined thrusts of XL-MIMO and new spectrum utilization are essential for achieving unprecedented spectral efficiency and spatial resolution \cite{10496996,you2023near}. More importantly, this integration fundamentally reshapes the electromagnetic properties of the wireless environment. The synergistic impact of MIMO's evolution towards XL-MIMO, together with the adoption of new spectrum like FR3,  signifies that the transition from conventional far-field communications (planar-wave propagation) to near-field communications (spherical-wave propagation) \cite{9903389,liu2023near}, and ultimately to the more general \emph{mixed near-field and far-field communications} \cite{zhang2023mixed}, is no longer merely a theoretical assumption. Instead, it emerges as an indispensable and practical consideration for air interface design in 6G and NextG systems. For instance, in an XL-MIMO system where a base station (BS) is equipped with $128$ antennas at half-wavelength inter-element spacing operating at $24$ GHz,
the \emph{Rayleigh distance} is approximately $100$ meter (m). Consequently, considering the typical cell sizes in fifth-generation (5G) networks, there are likely users in both the near-field and far-field regions of the BS. This increasingly prevalent mixed-field scenario introduces complex design challenges that demand careful consideration in standardization efforts, including mixed-field interference management \cite{zhang2023mixed} and mixed-field integrated sensing and communications (ISAC) \cite{10812003}. Moreover, understanding and leveraging these mixed-field characteristics will be crucial for realizing novel applications and ensuring their robust operation, facilitating mixed-field user coexistence \cite{10129111} as well as mixed-field simultaneous wireless information and power transfer \cite{zhang2023swipt}. 


More specifically, unlike purely near-field or far-field communications, prior work \cite{zhang2023mixed} has highlighted a key characteristic inherent to mixed-field communications: a near-field user experiences significant interference from beams intended for far-field users, even when positioned at a different spatial angle. This interference, referred to as  \emph{mixed near-field and far-field interference}, arises from the \emph{energy-spread} effect caused by the mismatch between the far-field beamforming vector and the near-field steering vector. This mismatch leads to strong correlations across a wide angular domain, thereby inevitably resulting in complex mixed-field interference challenges. This interference, combined with pure near-field and far-field inter-user interference, significantly complicates the design of mixed-field communication systems. Moreover, existing work on mitigating such interference is still in its infancy, and the related challenges have not been thoroughly investigated.

 Movable antennas (MAs)/fluid antennas \cite{11197972,9264694,9650760,10318061,10243545,10595399,Zhu_Corre,Ma_Corre} have recently emerged as a promising solution for interference management by exploiting the degrees-of-freedom (DoFs) offered by flexible antenna movement. Specifically, the authors in \cite{Zhu_Corre} demonstrated that by properly designing the antenna position, full array gain can be achieved in the desired direction while completely nullifying interference in all undesired directions. Subsequently, this study was further extended in \cite{Ma_Corre} to multiple desired directions, where the beamforming gain across signal directions is maximized while the maximum interference power in undesired directions is constrained. However, the aforementioned works specifically tackled interference mitigation for far-field communications and may not be applicable to mixed-field communication scenarios. In such scenarios, a complex interplay of near-field inter-user interference, far-field inter-user interference, and mixed-field inter-user interference exists, necessitating advanced interference mitigation approaches.

To fully exploit the six-dimensional (6D) spatial DoFs, the concept of 6D movable antennas (6DMAs) has recently been introduced, allowing flexible adjustment of both the three-dimensional (3D) position and 3D rotation of antennas/arrays \cite{shao6d,shao20246d,10883029,10945745}. Although this additional flexibility significantly enhances adaptability to the wireless environment, it also incurs substantial design complexity and implementation cost due to the intricate mechanical constraints associated with antenna movement. To alleviate these challenges, rotatable antennas (RAs) have been proposed as a simplified and cost-effective 6DMA implementation \cite{zheng2025rotatableM,zheng2025rotatable,xie2025thz,dai2025rotatable,zhou2025rotatable,10960698}. RAs can be considered as a special case of 6DMAs, maintaining fixed antenna positions while allowing antenna rotation. Accordingly, RAs greatly reduce hardware complexity yet still offer the capability to reshape radiation patterns and enable flexible coverage. Initial studies have explored the integration of RAs into various wireless systems. For instance, in \cite{zheng2025rotatable}, a new RA model was proposed to enhance the communication performance by independently adjusting each antenna’s boresight direction to achieve a desired array gain pattern. 
In addition, \cite{dai2025rotatable} employs RAs to enable secure data transmission between a BS and a legitimate user in the presence of multiple eavesdroppers, while the authors in \cite{zhou2025rotatable} and \cite{10960698} optimize the orientation of RA array to enhance ISAC performance in single-BS and cooperative BS scenarios, respectively.
 Despite these advances, the potential of RAs in mixed-field communication systems remains largely unexplored, and their ability to tackle the unique interference challenges inherent to mixed-field scenarios has not been thoroughly studied yet. Moreover, the existing literature on RAs mainly focuses on optimizing rotation angles either element-wise or array-wise, which fails to strike a balance between design flexibility and practical implementability. This limitation calls for a more flexible architecture that can better navigate this trade-off.

\begin{figure}[t]
	\centering
\includegraphics[width=0.38\textwidth]{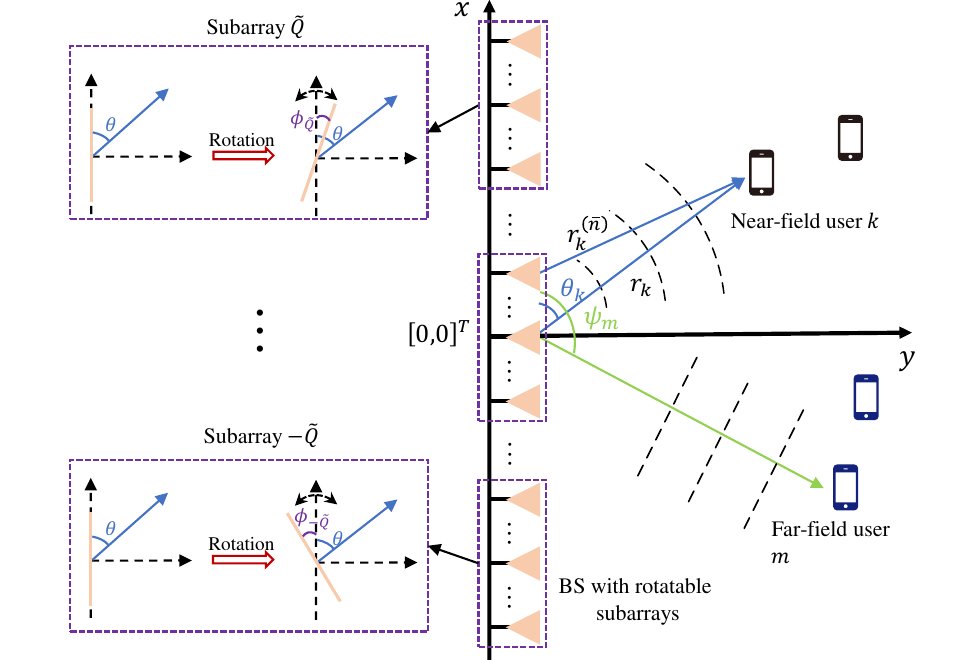}
	\caption{Schematic of modular RA-enabled mixed-field communication system with three modular subarrays.} \label{Fig:systemM}\vspace{-1.7em}
\end{figure}

Motivated by the above, in this paper, we consider a new \emph{modular} RA-enabled mixed near-field and far-field communication system and investigate the performance gains achievable through RAs.
Specifically, we consider a scenario
where a {modular} RA-array, comprising multiple subarrays, is deployed at a BS to serve
multiple near-field users in addition to multiple legacy far-field users supported by preconfigured spatial beams, as illustrated in Fig. \ref{Fig:systemM}. We focus on the joint design of the power allocation for the near-field users and the rotation angles of all RA subarrays at the BS, where the BS is assumed to simultaneously steer
multiple beams towards the near-field users by leveraging the appealing near-field beam-focusing property. The main contributions of the paper are summarized as follows.

\begin{itemize}
    \item First, to the best of our knowledge, we are the first to propose leveraging RAs for improving the communication performance in mixed near-field and far-field communication systems. We formulate an optimization problem to maximize the sum-rate of the near-field users by jointly optimizing the power allocation for the near-field users and the rotation angles of all RA subarrays, subject to constraints on the BS transmit power and the admissible rotation range of each subarray.
    \item Second, to shed light onto the impact of RAs on mixed-field communication performance, we first analyze a special case where all subarrays share the same rotation angle. For this case, we derive closed-form expressions for the \emph{rotation-dependent} normalized near-field interference and the \emph{rotation-dependent} normalized mixed-field interference using the Fresnel integrals. We then analytically demonstrate that array rotation effectively suppresses both near-field and mixed-field interference, thereby significantly enhancing mixed-field communication performance.
    \item Third, for the general case of \emph{subarray-wise} rotation, a non-convex optimization problem with highly coupled variables is obtained. To tackle this,
    we devise an efficient \emph{double-layer} algorithm to obtain a high-quality solution. Specifically, the proposed approach consists of two layers: the inner layer optimizes the power allocation via successive convex approximation (SCA), while the outer layer determines the rotation angles of the subarrays using particle swarm optimization (PSO). 
    \item   Finally, our numerical results highlight the significant benefits of integrating RAs into mixed-field communication systems and validate the effectiveness of the proposed joint design. In particular, we reveal that: 1) RA-enabled systems significantly outperform conventional fixed-antenna systems in mixed-field communication scenarios; 2) subarray-wise rotation offers superior performance compared to uniform array rotation due to greater design flexibility; and 3) the proposed joint design substantially surpasses benchmark schemes that only optimize either power allocation or rotation angles of RA subarrays.
\end{itemize}

 \emph{Notations:} Bold lower-case letters denote column vectors. For a vector $\mathbf{x}$, $[\mathbf{x}]_i$ indicates its $i$-th element. $\mathbb{C}$ represents the set of complex numbers. $\lfloor x \rfloor$ denotes the floor of $x$, i.e., the largest integer less than or equal to $x$.
 The superscripts $(\cdot)^T$ and $(\cdot)^H$ represent the transpose and the conjugate transpose, respectively. $\mathcal{CN}(0,\sigma^2)$ denotes
 the zero-mean circularly symmetric
complex Gaussian (CSCG) distribution with variance $\sigma^2$. The notation $|\cdot|$ denotes the absolute value of a real number or the cardinality of a set. Matrix $\mathbf{I}_{N}$ represents the identity matrix of size $N$. $\mathcal{O}(\cdot)$ denotes the standard big-O notation. 
\vspace{-3pt}
\section{System Model and Problem Formulation}\label{Sec:pro_des}
As illustrated in Fig.~\ref{Fig:systemM}, we consider a mixed-field downlink communication system enabled by \emph{modular} RAs. The system comprises a BS equipped with a uniform linear array (ULA) of $N = 2\tilde{N}+1$ antennas, which serves $K$ single-antenna near-field users indexed by $\mathcal{K}=\{1,2,\ldots,K\}$ in the presence of $M$ legacy single-antenna far-field users indexed by $\mathcal{M}=\{1,2,\ldots,M\}$. The ULA has an aperture of $D=(N-1)d$, where $d=\frac{\lambda}{2}$ represents the inter-element spacing, with $\lambda$ denoting the carrier wavelength.
The distinction between near-field and far-field users is determined by the Rayleigh distance, defined as $Z_{\rm Ray}=\frac{2D^2}{\lambda}$. 
Specifically, near-field users are located at distances smaller than $Z_{\rm Ray}$ from the BS, while far-field users are positioned beyond this threshold.

\vspace{-3pt}
\subsection{Channel Models}
Without loss of generality, we assign the center of the ULA as the origin of the coordinate system, with the $n$-th element, $\forall{n\in \{-\tilde{N},\ldots,\tilde{N}\}}$, located at $[nd,0]^T$. The BS employs a modular RA architecture, partitioning the ULA into $Q = 2\tilde{Q} + 1$ subarrays, each capable of independent one-dimensional (1D) rotation to enhance mixed-field communication performance. 
\subsubsection{Modular RA architecture}
The ULA is divided into $Q$ subarrays, indexed by $\mathcal{Q} = \{-\tilde{Q}, \ldots, \tilde{Q}\}$, with each subarray consisting of $\bar{N}$ antennas. We assume that $\bar{N}$ is an odd number and the antennas within each subarray $q \in \mathcal{Q}$ are indexed by  $\bar{n}\in \mathcal{\bar{N}}=\{-\frac{\bar{N}-1}{2},\ldots,\frac{\bar{N}-1}{2}\}$.\footnote{We assume that each subarray comprises  $\bar{N}=\lfloor \frac{N}{Q}\rfloor$ antennas, with the remaining antennas allocated to the leftmost and rightmost subarrays, respectively.} 
In particular, $Q$ subarrays are connected to a common smart controller \cite{zheng2025rotatable}, wherein
each subarray $q\in \mathcal{Q}$ 
 can be electrically rotated around the $x$-axis by an angle $\phi_q$ to facilitate the \emph{subarray-wise} rotation for improving the mixed-field communication performance \cite{zhou2025rotatable}.\footnote{In this paper, to theoretically evaluate the performance gains of RAs, we assume that the BS employs a ULA with subarray-wise 1D rotation \cite{zhou2025rotatable}. Moreover, the implementation of ULA rotation requires minimal mechanical components in practice, enhancing its practicality for real-world deployment. Notably, the extension to more general cases, including uniform planar array (UPA) and/or three-dimensional (3D) rotation, is straightforward, which will be discussed in detail in Section \ref{Sec:Extension}.} The position of the $\bar{n}$-th antenna, before rotation, is defined relative to the subarray’s center at ${c}_{q}=q\bar{N}d$ along the $x$-axis. After rotation by $\phi_q$,  the antenna’s coordinate become $     \mathbf{w}_{q,\bar{n}} = [c_q+\bar{n}d\cos{\phi_q},\bar{n}d\sin{\phi_q}]^T$, $\forall{q\in \mathcal{Q}}$ and $\bar{n}\in\mathcal{\bar{N}}$. Based on the above,
the channel models for near-field and far-field users accounting for the \emph{subarray-wise} rotation are described as follows.

\subsubsection{Near-field channel model} 
For a typical near-field user, the channel is modeled based on spherical-wave propagation. Consider near-field user $k\in \mathcal{K}$, located at a distance of $r_{k}$ and at a spatial angle of $\theta_k$ from the center of the entire ULA. The coordinates of the $k$-th user are given by $\mathbf{r}_k=[r_k\cos{\theta_k},r_k\sin{\theta_k}]^T$. The
distance from the $\bar{n}$-th antenna in subarray $q$ to user $k$, accounting for the subarray’s rotation, is:
\begin{align}
    &r^{(\bar{n})}_{k} = \|\mathbf{r}_k-\mathbf{w}_{q,\bar{n}}\|\nn\\
    &=
    \sqrt{(r_k\cos{\theta_k}-c_q-\bar{n}d\cos{\phi_q})^2+(r_k\sin{\theta_k}-\bar{n}d\sin{\phi_q})^2}.
\end{align}
Then, the near-field channel steering vector $\mathbf{b}_q(\theta_{k},r_{k},\phi_q)\in\mathbb{C}^{\bar{N}}$ for subarray $q$ is given by
\begin{equation}\label{Eq:NF_steering_sub}
\mathbf{b}_q(\theta_{k},r_{k},\phi_q)=\left[e^{-j \frac{2\pi}{\lambda}(r^{(\bar{n})}_{k}-r_{k})}\right]_{n\in\mathcal{\bar{N}},q\in\mathcal{Q}}.
\vspace{-3pt}
\end{equation}
As such, the overall near-field channel vector, incorporating all subarrays, is given by
\begin{equation}\label{Eq:NF_steering}
	\!\!\mathbf{b}^H(\theta_{k},r_{k},\boldsymbol{\phi})\!\!=\!\!\frac{1}{\sqrt{N}}\!\!\left[\mathbf{b}_{-\tilde{Q}}^H(\theta_{k},r_{k},\phi_{-\tilde{Q}}),\cdots,\mathbf{b}_{\tilde{Q}}^H(\theta_{k},r_{k},\phi_{\tilde{Q}})\right],
\end{equation}
where $\boldsymbol{\phi}=[\phi_{-\tilde{Q}},\cdots,\phi_{\tilde{Q}}]^{T}$ denotes the vector of rotation angles of all subarrays.
In this paper, we adopt a multipath channel model for the near-field users, comprising a
line-of-sight (LoS) path and $L_{{\rm N},k}$ non-line-of-sight (NLoS)
paths induced by scatterers in the environment.
Then, the channel between the BS and near-field user $k$ is modeled as
\begin{equation}
    \!\!\mathbf{h}^H_{{\rm N},k} \!\!=\!\! \sqrt{N}{h}_{{\rm N},k}\mathbf{b}^H(\theta_{k},r_{k},\boldsymbol{\phi})\!+\!\sum^{L_{{\rm N},k}}_{\ell=1}\sqrt{N}{h}_{{\rm N},k,\ell}\mathbf{b}^H(\theta_{k,\ell},r_{k,\ell},\boldsymbol{\phi}),
\end{equation}
where ${h}_{{\rm N},k}$ and ${h}_{{\rm N},k,\ell}$ denote the complex-valued channel gains of the LoS path and the $\ell$-th NLoS path between
the BS and near-field user $k$, respectively.\footnote{For simplicity, we assume an omni-directional antenna gain model. Nonetheless, the proposed scheme can be readily extended to the directional antenna model described in \cite{zheng2025rotatable,zhou2025rotatable}. } $\theta_{k,\ell}$ and $r_{k,\ell}$ are the angle and distance of the $\ell$-th scatterer associated with user $k$, measured relative to the center of the ULA.

\subsubsection{Far-field channel model}
The channel between the BS and far-field user $m\in \mathcal{M}$ follows the planar-wave propagation assumption. Thus, the far-field channel steering vector $\mathbf{a}_q(\psi_m,\phi_q)\in\mathbb{C}^{\bar{N}}$ for subarray $q$ is given by
\begin{equation}\label{Eq:ff_channel}
   \mathbf{a}_q(\psi_m,\phi_q) = \left[e^{-j \frac{2\pi}{\lambda} \bar{n}d\cos{(\psi_m-\phi_q)}}\right]_{\bar{n}\in\mathcal{\bar{N}},q\in\mathcal{Q}},
   \vspace{-3pt}
\end{equation}
where $\psi_m$ denotes the spatial angle of far-field user $m$. Accordingly, the overall far-field channel steering vector is given by
\begin{equation}\label{Eq:FF_steering}
\!\!\!\mathbf{a}^H(\psi_m,\boldsymbol{\phi})\!=\!\frac{1}{\sqrt{N}}\!\left[\mathbf{a}_{-\tilde{Q}}^H(\psi_m,\phi_{-\tilde{Q}}),\cdots,\mathbf{a}_{\tilde{Q}}^H(\psi_m,\phi_{\tilde{Q}})\right].
\vspace{-3pt}
\end{equation}
Consequently, the multipath far-field channel between the BS and the far-field user $m\in\mathcal{M}$ can be expressed as 
\begin{equation}
    \mathbf{h}^H_{{\rm F},m} = \sqrt{N}{h}_{{\rm F},m}\mathbf{a}^H(\psi_m,\phi)+\sum^{L_{{\rm F},m}}_{\ell=1}\sqrt{N}{h}_{{\rm F},m,\ell}\mathbf{a}^H(\psi_{m,\ell},\phi),
\end{equation}
where ${h}_{{\rm F},m}$ and ${h}_{{\rm F},m,\ell}$ denote the complex-valued channel gains of the LoS path and the $\ell$-th NLoS path between
the BS and user $m$, respectively.

\vspace{-12pt}
\subsection{Signal Model}
 Let $s_{{\rm N},k}\in \mathbb{C}$ and $s_{{\rm F},m}\in \mathbb{C}$ represent the signals transmitted by the BS to the $k$-th near-field user with power $P_{{\rm N},k}$ and to the $m$-th far-field user with power $P_{{\rm F},m}$, respectively. 
For \emph{legacy} far-field users, the preconfigured  beamforming vectors are defined as $\mathbf{w}_{{\rm F},m} \in \mathbb{C}^{N}, m \in \mathcal{M}$. For the near-field users, to reduce power consumption and hardware cost in upper mid-band MIMO systems, a hybrid beamforming architecture is employed to serve the $K$ near-field users with $K$ $(K\ll N)$ radio frequency (RF) chains \cite{zhang2023swipt}. 
  The transmitted signal vector is thus given by $\bar{\mathbf{s}}=\mathbf{F}_{\rm A}\mathbf{F}_{\rm D}\mathbf{s}$, where $\bar{\mathbf{s}}=[s_{{\rm N},1},\ldots,s_{{\rm N},K}]^T$, $\mathbf{F}_{\rm D} \in \mathbb{C}^{K\times K}$ is the digital precoder, and $\mathbf{F}_{\rm A}=[\mathbf{v}_{{\rm N},1},\ldots,\mathbf{v}_{{\rm N},K}]\in \mathbb{C}^{N\times K}$ is the analog precoder, with $\mathbf{v}_{{\rm N},k}\in \mathbb{C}^{N}$ representing the analog beamforming vector for the $k$-th near-field user. 
 As such, the received signal at the $k$-th near-field user is given by\footnote{To fully exploit the performance gains brought by RAs, we assume that perfect channel state information (CSI) for both near-field and far-field users is available at the BS \cite{zheng2025rotatable}. In practice, their CSI can be effectively acquired through existing far-field and near-field beam training \cite{zhang2022fast} and/or channel estimation techniques \cite{9598863}.} 
\begin{align}
	y_{{\rm N},k} &= \mathbf{h}^{H}_{{\rm N},k}\sum_{i\in\mathcal{K}} \mathbf{w}_{{\rm N},i}{s}_{{\rm N},i}+ \sum_{m\in\mathcal{M}}\mathbf{h}^{H}_{{\rm N},k}\mathbf{w}_{{\rm F},m}{s}_{{\rm F},m} + n_{k},
\end{align}
where $\mathbf{w}_{{\rm N},i} = \mathbf{F}_{\rm A}\mathbf{f}_{{\rm D},i}$ with $\mathbf{f}_{{\rm D},i}$ being the $i$-th column of $\mathbf{F}_{\rm D}$.
$n_{k} \sim \mathcal{CN}(0,\sigma^2_{k})$ 
represents the additive white Gaussian noise (AWGN) at the near-field user $k$. Accordingly, the achievable rate (bits/s/Hz) of near-field user $k$ is given by $R_{k}=\log_2\left(1+\gamma_{k}\right) $, where
\begin{align}
&\gamma_{k}=\nn\\
&\frac{P_{{\rm N},k}|\mathbf{h}_{{\rm N},k}^H\mathbf{w}_{{\rm N},k}| ^2}{\sum_{i\in\mathcal{K}\setminus\{k\}}P_{{\rm N},i}| \mathbf{h}_{{\rm N},k}^H\mathbf{w}_{{\rm N},i}| ^2+\sum_{m\in\mathcal{M}}P_{{\rm F},m}|\mathbf{h}_{{\rm N},k}^H\mathbf{w}_{{\rm F},m}|+\sigma^2_{k}}.
\end{align}

\begin{figure*}[b]
\vspace{-10pt}
   		\hrulefill
        \vspace{-6pt}
	\begin{align}\label{Eq:sum-rate}
&R_{k}\left(\{P_{{\rm N},k}\},\boldsymbol{\phi}\right)=\nn\\
&\log_2\left(1+\frac{P_{{\rm N},k}N|h_{{\rm N},k}|^2}{\sum_{i\in\mathcal{K}\setminus\{k\}}P_{{\rm N},i}N|h_{{\rm N},k}|^2| \mathbf{b}^H(\theta_k,r_k,\boldsymbol{\phi})\mathbf{b}(\theta_i,r_i,\boldsymbol{\phi})|^2+\sum_{m\in\mathcal{M}}P_{{\rm F},m}N|h_{{\rm N},k}|^2|\mathbf{b}^H(\theta_k,r_k,\boldsymbol{\phi})\mathbf{a}(\psi_m,\boldsymbol{\phi})|^2+\sigma^2_{k}}\right).
	\end{align}
\end{figure*}
\vspace{-12pt}
\subsection{Problem Formulation}
To shed light onto the impact of RAs on mixed-field communication systems, we employ a low-complexity hybrid beamforming architecture \cite{zhang2023swipt}. In this architecture, $K$ beams are directed towards the near-field users using near-field beam-focusing, while multiple \emph{preconfigured} maximum
ratio transmission (MRT)-like beams are steered toward the legacy far-field users \cite{10129111} with fixed power allocations. 
The beamforming vectors are defined as $\mathbf{v}_{{\rm N},k} = \frac{\mathbf{h}_{{\rm N},k}}{\|\mathbf{h}_{{\rm N},k}\|}$ for the $k$-th near-field user and $\mathbf{w}_{{\rm F},m} = \frac{\mathbf{h}_{{\rm F},m}}{\|\mathbf{h}_{{\rm F},m}\|}$ for the $m$-th far-field user. The digital beamforming is configured as an identity matrix, i.e., $\mathbf{F}_{\rm D}=\mathbf{I}_{K}$ \cite{zhang2023swipt,dai_LDMA}.\footnote{To  enhance the system performance, the weighted minimum mean square error (WMMSE) or zero-forcing (ZF) based digital beamforming can be properly designed based on the analog precoder $\mathbf{F}_{\rm A}$ \cite{zhang2023swipt,dai_LDMA}, with performance evaluations presented in Section \ref{Sec:NR}.}
Consequently, the achievable rate of near-field user $k$ is given by \eqref{Eq:sum-rate} at the bottom of this page.
We consider the system sum-rate maximization problem by jointly optimizing the power allocation for the near-field users $\{P_{{\rm N},k}\}$, and the subarray rotation angles $\boldsymbol{\phi}$. The optimization problem is formulated as follows:
\begin{subequations}
	\begin{align}
		({\bf P1}):~~~\max_{\substack{\{P_{{\rm N},k}\}, \boldsymbol{\phi}} }  &~~	\sum_{k \in \mathcal{K}}~R_{k}\left(\{P_{{\rm N},k}\},\boldsymbol{\phi}\right)
		\\
		\text{s.t.}
		&~~ 	\sum_{k \in \mathcal{K}}P_{{\rm N},k}\le P,\label{P1:pow_cons}\\
        &~~ 	\left|[\boldsymbol{\phi}]_p-[\boldsymbol{\phi}]_q \right|>\frac{\pi}{2} ,~\forall{p,q\in\mathcal{Q}},\label{P1:rot_mutual}\\
        &~~ 	[\boldsymbol{\phi}]_q \in \mathcal{C}_{{\phi_q}},~\forall{q\in\mathcal{Q}},\label{P1:rot_region}
	\end{align}
 \end{subequations}
where \eqref{P1:pow_cons} enforces the BS transmit power constraint, with $P$ being the maximum transmit power. \eqref{P1:rot_mutual} ensures that, for any two subarrays $p,q\in\mathcal{Q}$, their individual rotations $[\boldsymbol{\phi}]_p$ and $[\boldsymbol{\phi}]_q$ form an obtuse angle, thereby preventing mutual signal reflections. \cite{shao6d,zheng2025rotatable}. \eqref{P1:rot_region} restricts the rotation angle of subarray $q$ to its corresponding admissible range $\mathcal{C}_{\phi_q}=[\phi_{q,{\rm min}},\phi_{q,{\rm max}}]$. 

Optimization problem (P1) is non-convex due to the following factors: 1) The objective function is not jointly concave in terms of the optimization variables $\{P_{{\rm N},k}\}$ and $\boldsymbol{\phi}$; 2) these variables are intricately coupled within the objective function; and 3)  the rotation angles $\boldsymbol{\phi}$ introduce complex dependencies in both the near-field and far-field steering vectors.  These characteristics render the optimal solution of (P1) computationally challenging.  To tackle this, we first analyze a special case in Section~\ref{Sec:Special}, where all subarrays employ the same rotation angle, to gain analytical insight into the impact of array rotation on mixed-field communication performance. Subsequently, in Section~\ref{Sec:General}, we address the general case with \emph{subarray-wise} rotation and propose an efficient \emph{double-layer} algorithm to obtain a high-quality suboptimal solution to problem (P1).

\begin{remark}[Extension to other system configurations]
    \emph{In this paper, we analytically investigate the impact of RAs on the performance of mixed-field communication systems, with a particular focus on the suppression of near-field inter-user interference as well as the unique mixed-field inter-user interference. Specifically, we study the maximization of the sum-rate for near-field users in the presence of legacy far-field users. 
    Furthermore, we extend our analysis to explore three additional system configurations:
    \begin{itemize}
\item Maximizing the sum-rate of the far-field users while ensuring the required communication rate for each near-field user. For this configuration, the original optimization problem (P1) needs to incorporate an additional per-user rate constraint. The proposed algorithm remains directly applicable to solve this modified problem.
    \item Maximizing the sum-rate of the near-field users while ensuring that each far-field user meets their communication rate requirement. Similarly, this setup requires problem (P1) to incorporate per-user rate constraints for far-field users, and the proposed algorithm can be directly applied.
    \item Maximizing the sum-rate of both near-field and far-field users. In this case, only the objective function of problem (P1) needs modification to account for the total sum-rate, while the proposed algorithm can still be directly applied.
    \end{itemize}
    }
\end{remark}

\vspace{-15pt}
\section{Analysis of Array Rotation on System Performance}\label{Sec:Special}
In this section, we analytically characterize the effect of RAs on the rate performance of mixed-field communications. To gain useful insights, we begin by considering a special case where all subarrays share the same rotation angle, referred to as \emph{array rotation}. For notational simplicity, we omit the subarray index $q$. Accordingly, we rewrite \eqref{Eq:sum-rate} as \eqref{Eq:sum-rate_SP} at the bottom of this page.
 From \eqref{Eq:sum-rate_SP}, it is clear that the system sum-rate is significantly affected by two types of interference: 
 \emph{near-field inter-user interference}, represented by $| \mathbf{b}^H(\theta_k,r_k,\phi)\mathbf{b}(\theta_i,r_i,\phi)|^2$, and  \emph{mixed-field interference} from the far-field users to near-field users, represented by $|\mathbf{b}^H(\theta_k,r_k,\phi)\mathbf{a}(\psi_m,\phi)|^2$. Moreover, the analytical impact of array rotation on system performance is intricately linked to these two types of interference. Therefore, we proceed by closely examining the influence of array rotation on each interference term in detail.
 To characterize the effects of the rotation angle $\phi\in[\phi_{\rm min},\phi_{\rm max}]$ on these interference terms, we introduce the following definitions.
\begin{figure*}[b]
\vspace{-15pt}
   		\hrulefill
        \vspace{-8pt}
	\begin{align}\label{Eq:sum-rate_SP}
&R_{k}\left(\{P_{{\rm N},k}\},{\phi}\right)=\nn\\
&\log_2\left(1+\frac{P_{{\rm N},k}N|h_{{\rm N},k}|^2}{\sum_{i\in\mathcal{K}\setminus\{k\}}P_{{\rm N},i}N|h_{{\rm N},k}|^2| \mathbf{b}^H(\theta_k,r_k,{\phi})\mathbf{b}(\theta_i,r_i,{\phi})|^2+\sum_{m\in\mathcal{M}}P_{{\rm F},m}N|h_{{\rm N},k}|^2|\mathbf{b}^H(\theta_k,r_k,{\phi})\mathbf{a}(\psi_m,{\phi})|^2+\sigma^2_{k}}\right).
	\end{align}
\end{figure*}
\begin{definition}
    \emph{ The \emph{rotation-dependent} normalized near-field inter-user interference between any near-field users $k$ and $i$, $\forall{k,i\in\mathcal{K}}$ is defined as:
    \begin{equation}
\rho_{\rm NN}(\phi,\theta_k,r_k,\theta_i,r_i)\triangleq|\mathbf{b}^H(\theta_k,r_k,\phi)\mathbf{b}(\theta_i,r_i,\phi)|.
 \end{equation}}
\end{definition}
\begin{definition}
    \emph{ The \emph{rotation-dependent} normalized mixed-field inter-user interference between the far-field user $m\in\mathcal{M}$ and near-field user $k\in \mathcal{K}$ is defined as:
    \begin{equation}
\rho_{\rm NF}(\phi,\theta_k,r_k,\psi_m)\triangleq|\mathbf{b}^H(\theta_k,r_k,\phi)\mathbf{a}(\psi_m,\phi)|.
 \end{equation}}
\end{definition}
Moreover, it follows directly from \eqref{Eq:sum-rate_SP} that the sum-rate is a monotonically decreasing function of both $\rho_{\rm NN}(\cdot)$ and $\rho_{\rm NF}(\cdot)$.
Mathematically, these interference terms are explicitly given in \eqref{Eq:NN_inter} and \eqref{Eq:NF_inter}, as shown at the top of the next page. These expressions, however, are complex and difficult to analyze directly. To address this, we obtain closed-form expressions by using the Fresnel integrals, as presented below.
\begin{figure*}[t]
	\begin{equation}\label{Eq:NN_inter}
\rho_{\rm NN}(\phi,\theta_k,r_k,\theta_i,r_i)
=\frac{1}{N}\left|\sum_{n=-\tilde{N}}^{\tilde{N}}e^{j\frac{2\pi}{\lambda}\left[ n^2\left(\frac{d^2\sin^2{(\phi-\theta_k)}}{2r_k}-\frac{d^2\sin^2{(\phi-\theta_i)}}{2r_i}\right)+n(d\cos{(\phi-\theta_i)}- d\cos{(\phi-\theta_k)}) \right]}\right|.
 \vspace{-3pt}
	\end{equation}
    		\hrulefill
            \vspace{-10pt}
\end{figure*}
\begin{figure*}[t]
	\begin{equation}\label{Eq:NF_inter}
\rho_{\rm NF}(\phi,\theta_k,r_k,\psi_m)
=\frac{1}{N}\left|\sum_{n=-\tilde{N}}^{\tilde{N}}e^{j\frac{2\pi}{\lambda}\left[ n^2\frac{d^2\sin^2{(\phi-\theta_k)}}{2r}+n(d\cos{(\psi_m-\phi)}- d\cos{(\phi-\theta_k))} \right]}\right|. \vspace{-5pt}
	\end{equation}
    		\hrulefill
            \vspace{-20pt}
\end{figure*}

\vspace{-8pt}
\subsection{Closed-form Expressions for Both Interference Terms}
A closed-form approximation for the \emph{rotation-dependent} normalized near-field inter-user interference in \eqref{Eq:NN_inter} is provided in Theorem \ref{The:NN_inter}.
\begin{theorem}\label{The:NN_inter}
    \emph{The \emph{rotation-dependent} normalized near-field inter-user interference $\rho_{\rm NN}(\phi,\theta_k,r_k,\theta_i,r_i)$ in \eqref{Eq:NN_inter} can be approximated as:
    \begin{equation}\label{Eq:NN_inter_approx}
        \rho_{\rm NN}(\phi,\theta_k,r_k,\theta_i,r_i) \approx G(\beta_1,\beta_2) = \left|\frac{\widehat{C}(\beta_1,\beta_2)+j\widehat{S}(\beta_1,\beta_2)}{2\beta_2}\right|,
    \end{equation}
    where 
     \begin{align}\label{Eq:NN_inter_beta}
\beta_1&=\frac{\left(\cos{(\phi-\theta_k)}-\cos{(\phi-\theta_i)}\right)}{\sqrt{d\left|\frac{\sin^2{(\phi-\theta_k)}}{r_k}-\frac{\sin^2{(\phi-\theta_i)}}{r_i}\right|}}, \nn\\
\beta_2&=\frac{N}{2} \sqrt{d\left|\frac{\sin^2{(\phi-\theta_k)}}{r_k}-\frac{\sin^2{(\phi-\theta_i)}}{r_i}\right|}.
    \end{align}
     Moreover, $\widehat{C}(\beta_1,\beta_2) = {C}(\beta_1+\beta_2)-{C}(\beta_1-\beta_2)$ and $\widehat{S}(\beta_1,\beta_2) = {S}(\beta_1+\beta_2)-{S}(\beta_1-\beta_2)$, where $C(\cdot)$ and $S(\cdot)$ are the Fresnel integrals, defined as $C(x)=\int^{x}_{0}\cos(\frac{\pi}{2}t^2)\text{d}t$ and $S(x)=\int^{x}_{0}\sin(\frac{\pi}{2}t^2)\text{d}t$, respectively.
    }
\end{theorem}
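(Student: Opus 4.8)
The plan is to treat the finite sum in \eqref{Eq:NN_inter} as a quadratic-phase (chirp) sum and evaluate it via the standard continuous approximation, which is exactly what generates the Fresnel integrals. Writing the exponent as $\frac{2\pi}{\lambda}(A n^2 + B n)$ with $A=\frac{d^2}{2}\big(\frac{\sin^2(\phi-\theta_k)}{r_k}-\frac{\sin^2(\phi-\theta_i)}{r_i}\big)$ and $B=d\big(\cos(\phi-\theta_i)-\cos(\phi-\theta_k)\big)$, the first step is to replace the summation over $n\in\{-\tilde N,\dots,\tilde N\}$ by the integral $\int_{-\tilde N}^{\tilde N} e^{j\frac{2\pi}{\lambda}(Ax^2+Bx)}\,dx$, which is accurate for large $N$ because the unit sampling step in $n$ is fine relative to the variation of the phase.

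The second step is to complete the square, $Ax^2+Bx = A\big(x+\frac{B}{2A}\big)^2-\frac{B^2}{4A}$. The constant term contributes only a unit-modulus factor $e^{-j\frac{2\pi}{\lambda}B^2/(4A)}$, which disappears once the modulus $|\cdot|$ is taken and may therefore be discarded at the outset. After the shift $u=x+\frac{B}{2A}$, I would introduce the Fresnel normalization $t=u\sqrt{4A/\lambda}$, chosen so that $\frac{2\pi}{\lambda}Au^2=\frac{\pi}{2}t^2$; using $d=\lambda/2$ this reduces to $t=u\sqrt{d\Delta}$ with $\Delta=\frac{\sin^2(\phi-\theta_k)}{r_k}-\frac{\sin^2(\phi-\theta_i)}{r_i}$, and the integral becomes $\tfrac{1}{2}\sqrt{\lambda/A}\,[C(t)+jS(t)]$ evaluated at the transformed limits.

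The third step is to match the limits to $\beta_1,\beta_2$. Using $\tilde N\approx N/2$, the upper and lower integration limits evaluate to $\beta_2-\beta_1$ and $-(\beta_1+\beta_2)$, with $\beta_1,\beta_2$ precisely as defined in \eqref{Eq:NN_inter_beta}. Invoking the oddness of the Fresnel integrals, $C(-x)=-C(x)$ and $S(-x)=-S(x)$, collapses the two boundary contributions into $\widehat{C}(\beta_1,\beta_2)=C(\beta_1+\beta_2)-C(\beta_1-\beta_2)$ and $\widehat{S}(\beta_1,\beta_2)=S(\beta_1+\beta_2)-S(\beta_1-\beta_2)$. Finally, substituting $\lambda=2d$ and $A=\frac{d^2}{2}\Delta$ into the prefactor gives $\frac{1}{N}\cdot\frac{1}{2}\sqrt{\lambda/A}=\frac{1}{N\sqrt{d\Delta}}=\frac{1}{2\beta_2}$, which assembles into the claimed form $G(\beta_1,\beta_2)=\big|\frac{\widehat{C}+j\widehat{S}}{2\beta_2}\big|$.

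The main obstacle is the rigor of the sum-to-integral replacement in the first step: it is essentially a leading-order Euler--Maclaurin / stationary-phase argument and is only an approximation, which is why the statement is phrased with $\approx$. Two further points need care. First, $\Delta$ may be negative, in which case $A<0$ and the change of variables yields the conjugated Fresnel integrals; since only $|\cdot|$ is kept and $|C-jS|=|C+jS|$, this is harmless and is exactly why $|\Delta|$ (rather than $\Delta$) appears inside $\beta_1,\beta_2$. Second, the approximation $\tilde N\approx N/2$ used for the limits incurs an $\mathcal{O}(1/N)$ error that is negligible in the large-array regime of interest.
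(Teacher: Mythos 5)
Your proof is correct and follows essentially the same route as the paper's Appendix~\ref{App1}: a Riemann sum-to-integral approximation of the quadratic-phase sum, completion of the square with the unit-modulus constant phase discarded under the modulus, and a change of variables into the Fresnel normal form whose limits evaluate to $\beta_2-\beta_1$ and $-(\beta_1+\beta_2)$, yielding the prefactor $\frac{1}{2\beta_2}$. Your explicit handling of the sign of $\Delta$ and of the oddness of $C(\cdot)$ and $S(\cdot)$ only makes precise details the paper leaves implicit.
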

\begin{proof}
     Please refer to Appendix \ref{App1}.
\end{proof}

Next, we provide a closed-form approximation for the \emph{rotation-dependent} normalized mixed-field inter-user interference. 

\begin{theorem}\label{The:NF_inter}
    \emph{The \emph{rotation-dependent} normalized mixed-field inter-user 
 interference $\rho_{\rm NF}(\phi,\theta_k,r_k,\psi_m)$ in \eqref{Eq:NF_inter} can be approximated as
    \begin{equation}\label{Eq:NF_inter_appro}
        \rho_{\rm NF}(\phi,\theta_k,r_k,\psi_m) \approx G(\beta_1,\beta_2) = \left|\frac{\widehat{C}(\beta_1,\beta_2)+j\widehat{S}(\beta_1,\beta_2)}{2\beta_2}\right|,
    \end{equation}
    where 
     \begin{align}\label{Eq:NF_inter_beta}
        \beta_1 &=(\cos{(\phi-\theta_k)}-\cos{(\psi_m-\phi)})\sqrt{\frac{r_k}{d\sin^2{(\phi-\theta_k)}}},\nn\\
        \beta_2 &=\frac{N}{2} \sqrt{\frac{d\sin^2{(\phi-\theta_k)}}{r_k}}.
    \end{align}
    }
\end{theorem}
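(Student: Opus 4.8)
The plan is to follow the same chirp-sum-to-Fresnel-integral argument used for Theorem~\ref{The:NN_inter} in Appendix~\ref{App1}, now specialized to the mixed-field sum in \eqref{Eq:NF_inter}. The exponent there is a quadratic polynomial in $n$, so writing $a=\frac{2\pi}{\lambda}\frac{d^2\sin^2(\phi-\theta_k)}{2r_k}$ for the coefficient of $n^2$ and $b=\frac{2\pi}{\lambda}d\left(\cos(\psi_m-\phi)-\cos(\phi-\theta_k)\right)$ for the coefficient of $n$, the inner sum becomes $\sum_{n=-\tilde N}^{\tilde N}e^{j(an^2+bn)}$, a classical Fresnel (chirp) sum. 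First I would replace this discrete sum by the continuous integral $\int_{-N/2}^{N/2}e^{j(an^2+bn)}\,\mathrm{d}n$, which is accurate for large $N$ since the per-element phase increment varies slowly across the aperture; the factor $\frac{1}{N}$ in \eqref{Eq:NF_inter} will later cancel against the integral's length scale.

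Next I would complete the square, $an^2+bn=a\left(n+\tfrac{b}{2a}\right)^2-\tfrac{b^2}{4a}$, and drop the constant phase $-\tfrac{b^2}{4a}$ since it is annihilated by the modulus. The substitution $t=\left(n+\tfrac{b}{2a}\right)\sqrt{\tfrac{2a}{\pi}}$ then maps the integrand to the canonical form $e^{j\frac{\pi}{2}t^2}$, whose antiderivative is $C(t)+jS(t)$ by the definition of the Fresnel integrals. Tracking the limits $n=\pm N/2$ through the substitution yields the two endpoints $\beta_1\pm\beta_2$, with $\beta_2=\frac{N}{2}\sqrt{\frac{2a}{\pi}}$ and $\beta_1=\frac{b}{\sqrt{2\pi a}}$. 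Substituting $d=\frac{\lambda}{2}$ collapses the $\frac{d^2}{\lambda}$ factors to $\frac{d}{2}$ and reproduces exactly the $\beta_1,\beta_2$ stated in \eqref{Eq:NF_inter_beta}; the leftover prefactor $\frac{1}{N}\sqrt{\frac{\pi}{2a}}$ simplifies to $\frac{1}{2\beta_2}$, giving the claimed $G(\beta_1,\beta_2)$. Because $C$ and $S$ are odd, both $\widehat C$ and $\widehat S$ are even in $\beta_1$, so the sign choice in $b$ (equivalently the order of the two cosine terms) is immaterial and matches the stated form.

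The step requiring the most care is the sum-to-integral approximation, exactly as in Theorem~\ref{The:NN_inter}: one must argue the Riemann-sum error is negligible, which rests on $N$ being large and on the quadratic phase being well-sampled over the aperture. A useful simplification relative to the near-field case is that the far-field steering vector $\mathbf{a}(\psi_m,\phi)$ carries no quadratic phase, so the coefficient $a$ reduces to the single non-negative term $\frac{d^2\sin^2(\phi-\theta_k)}{2r_k}$ rather than the signed difference appearing in \eqref{Eq:NN_inter_beta}; consequently the absolute values inside the square roots of Theorem~\ref{The:NN_inter} are unnecessary here. Indeed, \eqref{Eq:NF_inter_appro} can be recovered directly from \eqref{Eq:NN_inter_approx} by taking the far-field limit $r_i\to\infty$ and identifying $\theta_i$ with $\psi_m$, which I would use as an independent check on the algebra.
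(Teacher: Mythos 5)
Your proposal is correct and follows essentially the same route as the paper: the paper proves Theorem~\ref{The:NN_inter} in Appendix~\ref{App1} by completing the square in the chirp sum, passing to a Riemann integral, and rescaling to the canonical Fresnel form with endpoints $\beta_1\pm\beta_2$, and it explicitly omits the proof of Theorem~\ref{The:NF_inter} as "similar" — your argument is precisely that similar specialization, with the algebra (including the $d=\lambda/2$ reduction and the $\tfrac{1}{2\beta_2}$ prefactor) checking out. Your observation that $G$ is even in $\beta_1$, so the sign of the cosine difference is immaterial, and the consistency check via the $r_i\to\infty$, $\theta_i\to\psi_m$ limit of Theorem~\ref{The:NN_inter} are both sound and consistent with Remark~\ref{Re:Properites}.
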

\begin{proof}
     The proof is similar to that of Theorem \ref{The:NN_inter}, and hence is omitted for brevity.
\end{proof}

\begin{remark}\label{Re:Properites}
    \emph{It is worth highlighting that Theorem \ref{The:NN_inter} generalizes Theorem \ref{The:NF_inter}, reducing to it as  $r_k$ (or $r_i$) approaches infinity, i.e., $r_k ({\rm or}~r_i) \rightarrow \infty$ . Moreover, since both closed-form expressions accounting for the near-field and mixed-field interference are derived capitalizing on the Fresnel integrals, they share the same properties of function $G(\cdot)$, which facilitate the subsequent interference analysis, as detailed below.
     \begin{itemize}
        \item First, when either parameter $\beta_1$ or $\beta_2$ equals zero, the function $G(\cdot)$ attains its maximum value  \cite{zhang2023mixed,polk1956optical}; if both are zero, the maximum value is one.
        \item Second, the function is symmetric with respect to $\beta_1$ and generally decreases as $|\beta_1|$ increases on both sides, while it generally decreases as $\beta_2$ increases \cite{dai_LDMA}. 
        \item Finally, when $\phi=0$ (fixed-antenna system), both expressions of near-field and mixed-field interference, i.e., $\rho_{\rm NN}(\cdot)$ and $\rho_{\rm NF}(\cdot)$, simplify to their fixed forms:
\begin{itemize}
    \item[1)] $\rho_{\rm NN}(0,\theta_k,r_k,\theta_i,r_i)= G(\bar{\beta}_1,\bar{\beta}_2)$ \cite{zhang2023swipt}, where 
    \begin{align}
          \bar{\beta}_1 &=\frac{\left(\cos{\theta_k}-\cos{\theta_i}\right)}{\sqrt{d\left|\frac{\sin^2{\theta_k}}{r_k}-\frac{\sin^2{\theta_i}}{r_i}\right|}},\nn\\
          \bar{\beta}_2&=\frac{N}{2}  \sqrt{d\left|\frac{\sin^2{\theta_k}}{r_k}-\frac{\sin^2{\theta_i}}{r_i}\right|}.
    \end{align}
    \item[2)] $\rho_{\rm NF}(0,\theta_k,r_k,\psi_m)= G(\tilde{\beta}_1,\tilde{\beta}_2)$ \cite{zhang2023mixed}, where 
    \begin{align}
        \tilde{\beta}_1 &=(\cos{\theta_k}-\cos{\psi_m})\sqrt{\frac{r_k}{d\sin^2{\theta}}},\nn\\
\tilde{\beta}_2&=\frac{N}{2} \sqrt{\frac{d\sin^2{\theta}}{r_k}}.
    \end{align}
\end{itemize}
    \end{itemize}}
\end{remark}

Theorems \ref{The:NN_inter} and \ref{The:NF_inter} reveal an important insight: compared to conventional fixed-antenna systems, the rotation angle $\phi$ introduces an additional DoF for controlling the values of $\beta_1$ and $\beta_2$, and consequently, the interference levels measured by $G(\cdot)$. This manifests that the added flexibility enables the suppression of both near-field and mixed-field interference through strategic adjustment of the rotation angle $\phi$. In the subsequent subsections, we analytically demonstrate how the rotation angle $\phi$ affects the near-field and mixed-field interference, respectively, and hence the overall system performance. 

\subsection{Near-field Inter-user Interference}\label{Sec:NNInter}
We now examine how the array rotation angle $\phi$ impacts the near-field interference.

\begin{lemma}\label{Le:NN_equal}
    \emph{For two near-field users located at the same angle, i.e., $\theta_k=\theta_i=\theta$, but different distances, i.e., $r_k\neq r_i$, the \emph{rotation-dependent} normalized near-field interference in \eqref{Eq:NN_inter_approx} depends solely on $\beta_2$ in \eqref{Eq:NN_inter_beta}, such that $ \rho_{\rm NN} \approx G(0,\beta_2) $.
    The effect of array rotation on this interference can be divided into the following two cases:
    \begin{itemize}
        \item If $\theta =  \frac{\pi}{2}$, array rotation does not reduce near-field interference, as it cannot increase $\beta_2$. The optimal rotation angle is:
        \begin{equation}
            \phi^* = 0.
        \end{equation}
        \item If $\theta \neq \frac{\pi}{2}$, array rotation 
        reduces the near-field interference by maximizing $\beta_2$. The optimal rotation angle is:
        \begin{equation}
            \phi^* = \begin{cases}
            \min(\theta-\frac{\pi}{2},\phi_{\rm max}), & {\rm if}~ \theta>\frac{\pi}{2}, \\
                 \max(\theta-\frac{\pi}{2},\phi_{\rm min}), & {\rm if}~ \theta<\frac{\pi}{2}.
            \end{cases}
        \end{equation}
    \end{itemize}}
\end{lemma}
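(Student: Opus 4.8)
The plan is to reduce the lemma to an elementary one-dimensional maximization of $\sin^2(\phi-\theta)$ by exploiting the structure of the coefficients in \eqref{Eq:NN_inter_beta} together with the monotonicity properties of $G(\cdot)$ established in Remark~\ref{Re:Properites}. First I would substitute $\theta_k=\theta_i=\theta$ into \eqref{Eq:NN_inter_beta}. The numerator of $\beta_1$ is $\cos(\phi-\theta_k)-\cos(\phi-\theta_i)$, which vanishes identically, so $\beta_1=0$ for every $\phi$; meanwhile $\beta_2=\frac{N}{2}\sqrt{d\,|\tfrac{1}{r_k}-\tfrac{1}{r_i}|}\,|\sin(\phi-\theta)|$. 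Inserting this into \eqref{Eq:NN_inter_approx} immediately yields $\rho_{\rm NN}\approx G(0,\beta_2)$, which establishes the first assertion.

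Next I would convert the interference minimization into a maximization of $\beta_2$. By the second property in Remark~\ref{Re:Properites}, $G(0,\beta_2)$ is a decreasing function of $\beta_2$ (attaining its maximum value $1$ at $\beta_2=0$), so minimizing $\rho_{\rm NN}$ over $\phi$ is equivalent to maximizing $\beta_2$ over the admissible range $\mathcal{C}_{\phi}$. Since the prefactor $\frac{N}{2}\sqrt{d\,|\tfrac{1}{r_k}-\tfrac{1}{r_i}|}$ is independent of $\phi$ and strictly positive (because $r_k\neq r_i$), this further reduces to maximizing $g(\phi)\triangleq\sin^2(\phi-\theta)$ over $\phi\in[\phi_{\rm min},\phi_{\rm max}]$.

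I would then handle the elementary maximization of $g$. Its unconstrained maximizers are $\phi=\theta\pm\frac{\pi}{2}$, both giving $g=1$. For $\theta=\frac{\pi}{2}$ one has $g(\phi)=\cos^2\phi$, whose maximum over any range containing $0$ is attained at $\phi=0$; since $g(0)=1$ already equals the global maximum, no rotation can increase $\beta_2$, giving $\phi^*=0$. For $\theta\neq\frac{\pi}{2}$, among the two maximizers the one nearer to $\phi=0$ is $\theta-\frac{\pi}{2}$, since for any spatial angle $\theta\in(0,\pi)$ we have $|\theta-\tfrac{\pi}{2}|<\tfrac{\pi}{2}<\theta+\tfrac{\pi}{2}$; moreover $g$ is monotone on the interval joining $0$ to $\theta-\frac{\pi}{2}$ (the argument $\phi-\theta$ stays within a single monotone branch of $\sin^2$). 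Hence, assuming the admissible range contains $0$, the constrained maximizer is the clipping of $\theta-\frac{\pi}{2}$ to $[\phi_{\rm min},\phi_{\rm max}]$: when $\theta>\frac{\pi}{2}$ the target $\theta-\frac{\pi}{2}$ is positive and is limited by the upper bound, giving $\phi^*=\min(\theta-\frac{\pi}{2},\phi_{\rm max})$, whereas when $\theta<\frac{\pi}{2}$ the target is negative and is limited by the lower bound, giving $\phi^*=\max(\theta-\frac{\pi}{2},\phi_{\rm min})$.

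The main obstacle will be rigorously justifying the two monotonicity claims. The decrease of $G(0,\beta_2)$ in $\beta_2$ holds strictly only over the dominant main lobe of the Fresnel-integral kernel (Remark~\ref{Re:Properites} phrases it as ``generally decreases''), so I would restrict attention to the operating regime in which $\beta_2$ lies in this main lobe; beyond it the statement is a qualitative trend rather than a strict inequality. Similarly, the clipping argument relies on $g$ being monotone between $0$ and the nearest maximizer and on the admissible range lying within that monotone lobe, conditions I would state explicitly as mild assumptions on $\mathcal{C}_{\phi}$. The remaining algebra is routine.
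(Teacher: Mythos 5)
Your proposal is correct and follows essentially the same route as the paper's proof in Appendix~\ref{App11}: substitute $\theta_k=\theta_i=\theta$ to get $\beta_1=0$ and $\beta_2\propto|\sin(\phi-\theta)|$, invoke the monotone-decrease property of $G(0,\beta_2)$ from Remark~\ref{Re:Properites}, and clip the unconstrained maximizer $\theta-\tfrac{\pi}{2}$ to the admissible range. Your explicit caveats about the ``generally decreases'' qualification and the assumption that $\mathcal{C}_\phi$ contains $0$ are reasonable refinements that the paper leaves implicit.
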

\begin{proof}
Please refer to Appendix \ref{App11}.
\end{proof}

\begin{lemma}\label{Le:NN_equal_dist}
    \emph{
    For two near-field users located at the same distance, i.e., ${r_k}={r_i}$, but different angles, i.e., $\theta_k\neq\theta_i$, there always exists a rotation angle $\phi$ such that 
        \begin{equation}
        \beta_1 > \bar{\beta}_1,
        \beta_2 > \bar{\beta}_2
  \Longrightarrow G(\beta_1,\beta_2) < G(\bar{\beta}_1,\bar{\beta}_2).
    \end{equation}
    }
\end{lemma}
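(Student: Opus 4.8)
The plan is to specialize the two spatial-frequency parameters $\beta_1,\beta_2$ of \eqref{Eq:NN_inter_beta} to the common-distance case $r_k=r_i=r$, reduce them to a pair of elementary one-variable trigonometric quantities, and then invoke the monotonicity of $G(\cdot)$ recorded in Remark~\ref{Re:Properites}. Since $G$ is even in $\beta_1$ and decreasing in both $|\beta_1|$ and $\beta_2$, it suffices to exhibit a single rotation angle $\phi$ for which $|\beta_1(\phi)|>|\bar\beta_1|$ and $\beta_2(\phi)>\bar\beta_2$ hold simultaneously (here I read the stated ``$\beta_1>\bar\beta_1$'' as a statement on magnitudes, consistent with $G$ being even); the strict decrease $G(\beta_1,\beta_2)<G(\bar\beta_1,\bar\beta_2)$ then follows at once. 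Thus the whole statement collapses to an existence claim about two scalar inequalities.

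\textbf{Key simplification.} Setting $r_k=r_i=r$ and introducing the shifted variable $x\triangleq\phi-\bar\theta$ with $\bar\theta\triangleq(\theta_k+\theta_i)/2$ and $\Delta\triangleq\theta_i-\theta_k$, I would apply the sum-to-product identities $\cos(\phi-\theta_k)-\cos(\phi-\theta_i)=-2\sin x\,\sin(\Delta/2)$ and $\sin^2(\phi-\theta_k)-\sin^2(\phi-\theta_i)=\sin(2x)\sin\Delta$ to \eqref{Eq:NN_inter_beta}. After cancelling the common factors this yields the compact forms
\begin{align}
\beta_2(\phi)&=\tfrac{N}{2}\sqrt{\tfrac{d}{r}\,|\sin\Delta|}\,\sqrt{|\sin 2x|},\nn\\
|\beta_1(\phi)|&=\sqrt{\tfrac{r}{d}\,|\tan(\Delta/2)|}\,\sqrt{|\tan x|}.
\end{align}
Because $\theta_k\neq\theta_i$, the prefactors $|\sin\Delta|$ and $|\tan(\Delta/2)|$ are fixed and strictly positive, and the fixed-array reference ($\phi=0$, i.e.\ $x=-\bar\theta$) gives $\bar\beta_2\propto\sqrt{|\sin 2\bar\theta|}$ and $|\bar\beta_1|\propto\sqrt{|\tan\bar\theta|}$ with the same prefactors. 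Hence the two target inequalities reduce to the purely elementary conditions
\begin{equation}
|\sin 2x|>|\sin 2\bar\theta|\quad\text{and}\quad|\tan x|>|\tan\bar\theta|.
\end{equation}

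\textbf{Construction.} I would then exhibit an explicit window of admissible $x$. Taking $\bar\theta\in(0,\tfrac{\pi}{2})$ without loss of generality (both quantities are even and read modulo $\pi$), the set $\{|\tan x|>|\tan\bar\theta|\}$ is the interval $x\in(\bar\theta,\pi-\bar\theta)$, while $\{|\sin 2x|>|\sin 2\bar\theta|\}$ contains $x\in(\bar\theta,\tfrac{\pi}{2}-\bar\theta)$. When $\bar\theta<\tfrac{\pi}{4}$ their intersection is non-empty and one may simply pick $x=\tfrac{\pi}{4}$ (so $\phi=\bar\theta+\tfrac{\pi}{4}$), for which $\beta_2$ attains its global maximum $\tfrac{N}{2}\sqrt{(d/r)|\sin\Delta|}$ and $|\beta_1|=\sqrt{(r/d)|\tan(\Delta/2)|}$; verifying both strict inequalities at this point and applying the monotonicity of $G$ completes the argument. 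A final routine check confirms the chosen $\phi$ can be kept inside the admissible range $\mathcal{C}_{\phi}$ (if not, any $x$ slightly exceeding $\bar\theta$ keeps $\phi$ arbitrarily close to a feasible value while preserving both strict inequalities).

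\textbf{Main obstacle.} The crux is that enlarging $|\beta_1|$ drives $x$ toward $\tfrac{\pi}{2}$ (the pole of $\tan$), whereas enlarging $\beta_2$ drives $x$ toward $\tfrac{\pi}{4}$ (the peak of $|\sin 2x|$); these requirements pull in opposite directions, so the intersection window above shrinks and closes once the common angle satisfies $\bar\theta\ge\tfrac{\pi}{4}$, i.e.\ $\theta_k+\theta_i\ge\tfrac{\pi}{2}$. I expect this trade-off between the linear ($\beta_1$) and quadratic ($\beta_2$) phase gains to be the main difficulty: in that regime one cannot strictly enlarge both parameters at once, and a fully rigorous argument must either restrict attention to the high-interference regime $\bar\theta<\tfrac{\pi}{4}$ (where both $\bar\beta_1$ and $\bar\beta_2$ are small, so the construction succeeds cleanly and the interference to be suppressed is largest) or appeal directly to the finer, non-monotone behaviour of $G(\cdot)$ rather than to the coarse ``both parameters increase'' sufficient condition.
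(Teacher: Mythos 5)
Your reduction is essentially the paper's argument, done more transparently. The paper's proof in Appendix \ref{App2} also specializes \eqref{Eq:NN_inter_beta} to $r_k=r_i=r$, rewrites the parameters via $\cos(\phi-\theta_i)-\cos(\phi-\theta_k)$ and $\sin(2\phi-\theta_k-\theta_i)\sin(\theta_k-\theta_i)$, defines the differences $\Delta_1(\phi)$, $\Delta_2(\phi)$, and then shows both are positive for a particular $\phi$ before invoking the monotonicity of $G$ from Remark \ref{Re:Properites}. The only real difference is that the paper works \emph{locally}, taking $\phi=-\epsilon$ and Taylor-expanding, whereas you work \emph{globally} in the shifted variable $x=\phi-\bar\theta$ and characterize the full solution sets $\{|\tan x|>|\tan\bar\theta|\}$ and $\{|\sin 2x|>|\sin 2\bar\theta|\}$. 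Your change of variables is a genuine improvement in clarity: it makes the geometry of the two conditions visible, and the paper's choice $\phi=-\epsilon$ is exactly the feasible-range-compatible endpoint of your window.

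The obstacle you flag is not a defect of your write-up; it is a genuine gap, and it is present in the paper's own proof. The paper concludes $\Delta_2(-\epsilon)\approx 2\epsilon\cos(\theta_k+\theta_i)>0$ on the grounds that ``$\cos(\theta_k+\theta_i)>0$ for $\theta_k+\theta_i<\pi$,'' which is false once $\theta_k+\theta_i>\tfrac{\pi}{2}$. Your window analysis shows this is not merely a fixable slip in the choice of perturbation direction: for $\bar\theta\in(\tfrac{\pi}{4},\tfrac{\pi}{2})$ the sets $\{|\tan x|>|\tan\bar\theta|\}$ and $\{|\sin 2x|>|\sin 2\bar\theta|\}$ are disjoint, so \emph{no} rotation angle simultaneously increases $|\beta_1|$ and $\beta_2$, and the ``both parameters increase'' sufficient condition cannot establish the lemma in that regime. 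So both your proof and the paper's are valid only under the implicit restriction $\theta_k+\theta_i<\tfrac{\pi}{2}$ (or its mirror image about broadside); covering the remaining regime would require appealing to the finer behaviour of $G$, exactly as you suggest. Two smaller points to tighten: your reading of ``$\beta_1>\bar\beta_1$'' as a statement about magnitudes is the right one (and is what the paper implicitly uses, since $G$ is even in $\beta_1$), but you should say so when writing the final version; and your explicit choice $\phi=\bar\theta+\tfrac{\pi}{4}$ will typically violate the admissible range $\mathcal{C}_{\phi}=[-\tfrac{\pi}{6},\tfrac{\pi}{6}]$ --- the feasible representative of your window near the reference point is $\phi=-\epsilon$ (for $\bar\theta<\tfrac{\pi}{4}$ and $\theta_k,\theta_i$ on that side of broadside), not a small positive offset of $x$ above $\bar\theta$.
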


\begin{proof}
    Please refer to Appendix \ref{App2}.
\end{proof}
   \begin{figure*}[t!]
\begin{minipage}{.245\textwidth}
	\centering
\includegraphics[width=1\columnwidth]{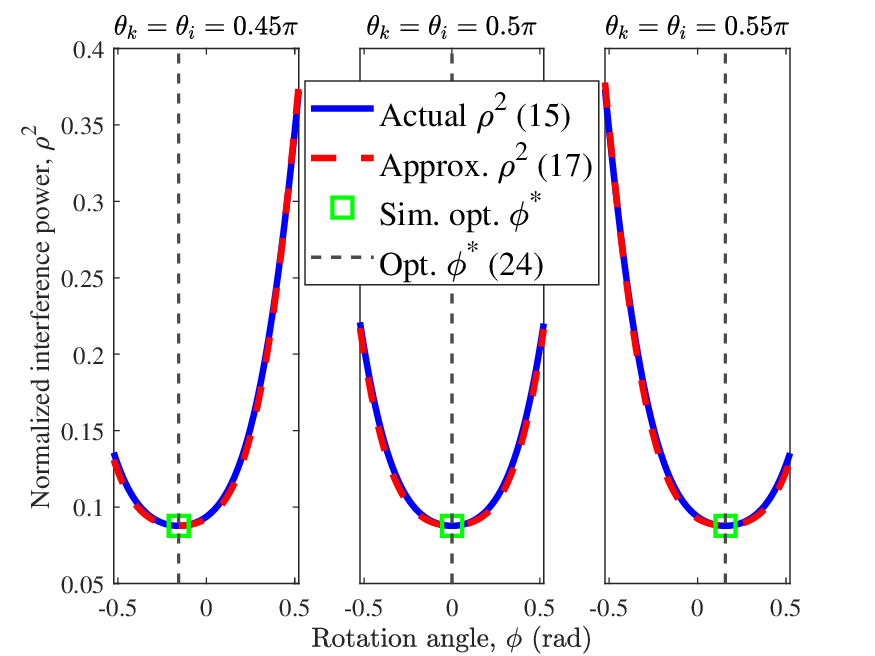}
	\caption{{Normalized near-field interference power versus rotation angle.}\label{Fig:NNverify}} 
    \end{minipage}	
    \hfill
    \begin{minipage}{.245\textwidth}
	\centering
\includegraphics[width=1\columnwidth]{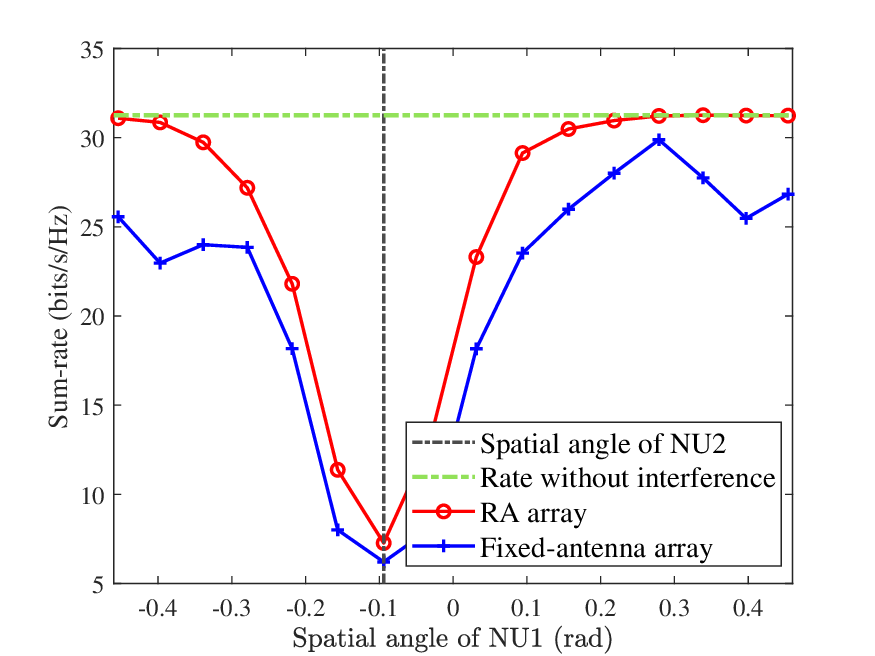}
	\caption{{Sum-rate versus spatial angle of NU1 with $r_{{\rm N},1} = 0.03Z_{\rm Ray}$ and $r_{{\rm N},2} = 0.08Z_{\rm Ray}$.}\label{Fig:NNangle}} 
    \end{minipage}	
    \hfill
        \begin{minipage}{.245\textwidth}
\centering
\includegraphics[width=1\columnwidth]{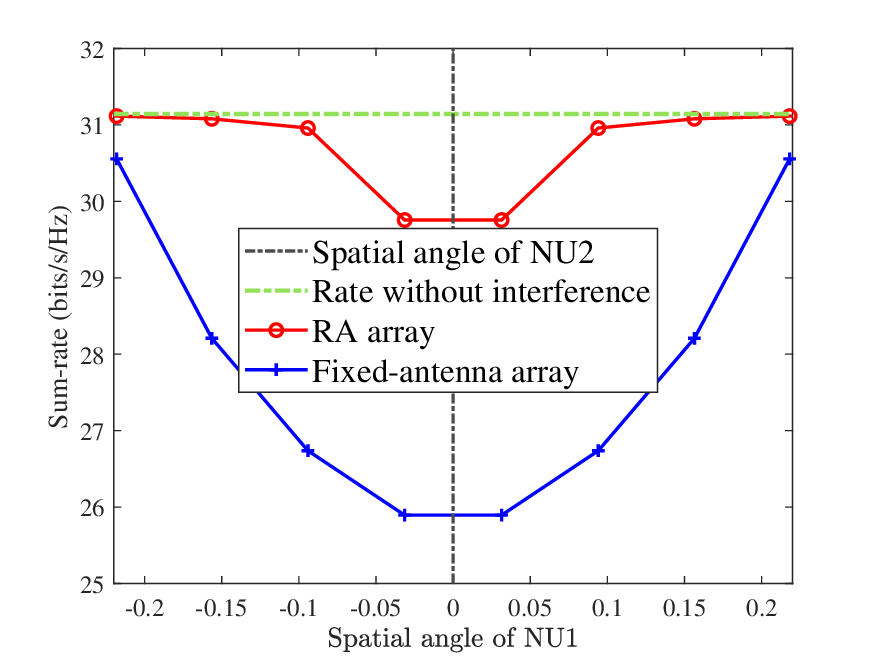}
	\caption{{Sum-rate versus spatial angle of NU1 with same distance  $r_{{\rm N},1} =r_{{\rm N},2}=0.05Z_{\rm Ray}$.}\label{Fig:NNdist1}} 
    \end{minipage}	
    \begin{minipage}{.245\textwidth}
\centering
\includegraphics[width=1\columnwidth]{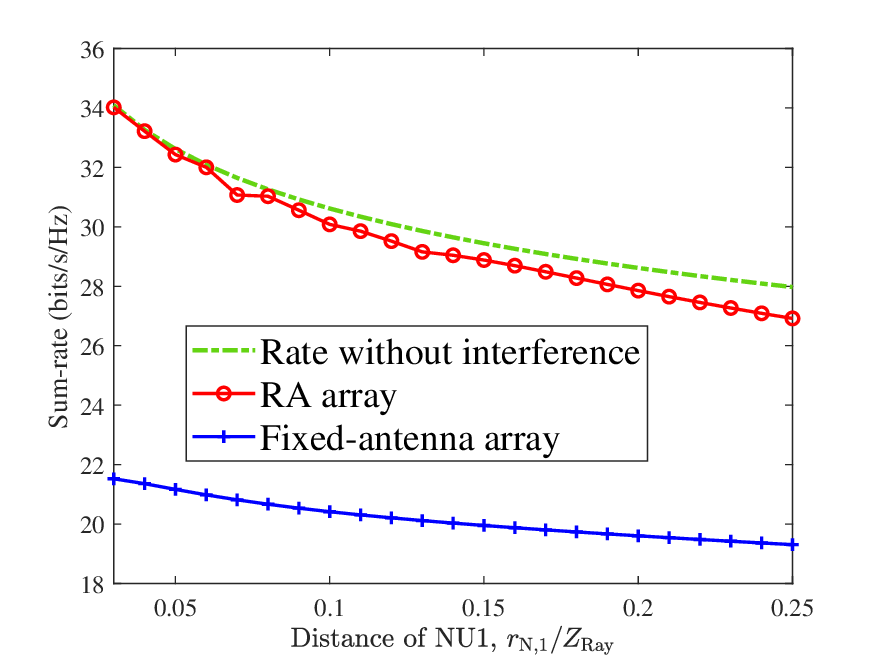}
	\caption{{Sum-rate versus distance of NU1 with $\theta_{{\rm N},1} = 0.55\pi$, $\theta_{{\rm N},2} =0.6\pi$, and $r_{{\rm N},2}=0.03Z_{\rm Ray}$.}\label{Fig:NNdist2}} 
    \end{minipage}	
    \vspace{-18pt}
\end{figure*}
\begin{remark}[Effect of RAs on near-field interference]\label{Re:effectRANN}
    \emph{Lemmas \ref{Le:NN_equal} and \ref{Le:NN_equal_dist} reveal several key insights into how RAs mitigate near-field interference:
   \begin{itemize}
       \item The extra DoF introduced by RAs enhances mixed-field communication performance by suppressing the near-field interference in nearly all scenarios. However, an exception occurs when both users are positioned at the boresight direction (i.e., $\theta_k=\theta_i=\frac{\pi}{2}$). In this case, array rotation yields no performance gain, regardless of the users' distances from the array, as it equally affects both near-field steering vectors, rendering it ineffective at reducing normalized near-field interference.
    \item  When both users share the same spatial angle but not at the boresight direction (i.e., $\theta_k=\theta_i\neq\frac{\pi}{2}$), the performance gains from array rotation depend on the spatial angle. These gains diminish as the angle approaches the boresight direction from either side, indicating that rotation becomes less effective at differentiating steering vectors as they increasingly align with the boresight.
    \item When both users are at the same distance from the array but have different spatial angles (i.e., $\theta_k\neq\theta_i$), the performance improvement due to array rotation becomes more pronounced. The distinct angles allow rotation to effectively exploit differences in the steering vectors, resulting in enhanced interference suppression. This effect will be revealed numerically in the subsequent example.
   \end{itemize}
    }
\end{remark}
\begin{example}[How does the rotation angle affect the near-field interference?]
    \emph{
    To demonstrate the benefits of RAs in mitigating the near-field inter-user interference, we first validate the accuracy of the proposed approximation in Theorem \ref{The:NN_inter} and the results derived in Lemma \ref{Le:NN_equal}. This corroboration is
    presented in Fig. \ref{Fig:NNverify}, where the Fresnel-based approximation is shown to closely match the exact normalized interference given in \eqref{Eq:NN_inter}. Moreover,
    the derived rotation angles align closely with the numerical values across various spatial-angle settings. Next, 
    we investigate the effects of the rotation angle on the achievable sum-rate with respect to the spatial angle and distance of the near-field user. This analysis is detailed in Figs. \ref{Fig:NNangle} -- \ref{Fig:NNdist2}, where the system parameters are set to $N=129$, $f=24$ GHz, and $K=2$ (i.e., near-field users NU1 and NU2).
   We see that RA-enabled systems consistently outperform their fixed-antenna counterparts across all tested angles and distances. We explore the effect of rotation on the achievable sum-rate from both angular and distance perspectives as follows:
    \begin{itemize}
        \item \emph{Effect of angle given different distances:} Fig. \ref{Fig:NNangle} shows that when two near-field users are positioned at the same angle (but not at the boresight direction), the performance gain from RAs is minimal. This is due to the peak inter-user interference caused by significant signal overlap in the angular domain. In such cases, array rotation can only adjust $\beta_2$ to partially mitigate this interference, resulting in limited improvement. However, as the angular separation increases, substantial performance enhancement is observed. With a large angular difference, the RA-enabled system rate approaches the interference-free bound, meaning that interference is nearly eliminated. This aligns with the insights in Remark \ref{Re:effectRANN}, which suggests that RAs are particularly effective when users are spatially separated in the angular domain, allowing rotation to steer the interference away from the desired signal.
        \item \emph{Effect of angle given the same distance:} Fig. \ref{Fig:NNdist1} illustrates that when two near-field users are located at the same distance, the performance gain from RAs depends heavily on their angular separation. Notably, significant gains are observed when the angular separation is small, where interference is inherently strong, and RAs can effectively reduce it by adjusting the array’s orientation. As the angular separation increases, both RA-enabled and fixed-antenna systems benefit from reduced interference due to the increased orthogonality of the near-field and far-field steering vectors.
    \item \emph{Effect of distance given different angles:} Fig. \ref{Fig:NNdist2} reveals an interesting result that when the two users have different spatial angles, RA-enabled systems achieve near interference-free performance at smaller near-field distances. However, as the distance of NU1 increases, this advantage slightly diminishes. This is because at small distances, the near-field spatial resolution is highly pronounced, indicating that even a slight distance difference between users significantly reduces inter-user interference. As NU1’s distance increases, the near-field effect diminishes, and the interference transitions into a more complex, mixed-field-like scenario. This shift reduces the effectiveness of rotation, leading to a performance drop relative to the interference-free rate bound. This phenomenon will be investigated in detail in Section \ref{Sec:mixed-field}, where the transition from near-field to mixed-field interference is analytically modeled.
    \end{itemize}}
\end{example}
\vspace{-15pt}
\subsection{Mixed-field Inter-user Interference}\label{Sec:mixed-field}  
In this subsection, we characterize the effect
of rotation angle $\phi$ on mitigating the mixed-field inter-user
interference. Since the interference characteristic between a far-field user $m$ and a near-field user $k$ applies universally to any near-field and far-field user pair, we omit these indices for notational convenience.

\begin{lemma}\label{Le:equalAngle}
    \emph{When the far-field user and the near-field user are at the same angle, i.e., $\psi=\theta$, the \emph{rotation-dependent} normalized mixed-field interference in \eqref{Eq:NF_inter_appro} depends only on $\beta_2$ in \eqref{Eq:NF_inter_beta}, such that $ \rho_{\rm NF} \approx G(0,\beta_2) $.
    The effect of rotation on this interference can be characterized as follows:
    \begin{itemize}
        \item If $\psi = \frac{\pi}{2}$, array rotation yields no reduction in interference, as $\beta_2$ cannot be increased. The optimal rotation angle is:
        \begin{equation}\label{Eq:boresight}
            \phi^* = 0.
        \end{equation}
        \item If $\psi\neq \frac{\pi}{2}$, array rotation reduces interference by maximizing $\beta_2$. The optimal rotation angle is:
        \begin{equation}\label{Eq:Nboresight}
            \phi^* = \begin{cases}
            \min(\psi-\frac{\pi}{2},\phi_{\rm max}), & {\rm if}~ \psi>\frac{\pi}{2}, \\
                 \max(\psi-\frac{\pi}{2},\phi_{\rm min}), & {\rm if}~ \psi<\frac{\pi}{2}.
            \end{cases}
        \end{equation}
    \end{itemize}
    }
\end{lemma}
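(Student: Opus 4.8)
The plan is to reduce Lemma \ref{Le:equalAngle} to a one–variable optimization of $\beta_2$ over the rotation angle $\phi$, exploiting the monotonicity of $G$ already recorded in Remark \ref{Re:Properites}. First I would substitute the hypothesis $\psi=\theta$ into the expression \eqref{Eq:NF_inter_beta} for $\beta_1$ supplied by Theorem \ref{The:NF_inter}. Because cosine is even, $\cos(\psi-\phi)=\cos(\theta-\phi)=\cos(\phi-\theta)$, so the factor $\cos(\phi-\theta)-\cos(\psi-\phi)$ vanishes identically for \emph{every} $\phi$. Hence $\beta_1=0$ irrespective of the rotation, and the mixed-field interference collapses to $\rho_{\rm NF}\approx G(0,\beta_2)$, exactly as the lemma asserts.

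Next I would invoke the stated properties of $G$ from Remark \ref{Re:Properites}: along the slice $\beta_1=0$ the function $G(0,\beta_2)$ attains its maximum value one at $\beta_2=0$ and is (generally) decreasing in $\beta_2$. Therefore minimizing the interference is equivalent to maximizing $\beta_2=\frac{N}{2}\sqrt{d\sin^2(\phi-\theta)/r_k}$ over the admissible range, which in turn reduces to maximizing $g(\phi):=\sin^2(\phi-\theta)$. This quantity is bounded above by one and attains that bound precisely when $\phi-\theta=\pm\frac{\pi}{2}$.

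I would then split into the two stated cases according to whether the common angle equals the boresight. If $\theta=\psi=\frac{\pi}{2}$, then already at $\phi=0$ we have $\sin^2(\phi-\theta)=\sin^2(-\frac{\pi}{2})=1$, the global maximum; any nonzero rotation gives $\sin^2(\phi-\frac{\pi}{2})=\cos^2\phi<1$, so rotation can only shrink $\beta_2$ and raise interference, forcing $\phi^*=0$. If $\theta=\psi\neq\frac{\pi}{2}$, then $\sin^2\theta<1$ at $\phi=0$, so improvement is possible; the unconstrained maximizers are $\phi=\theta\pm\frac{\pi}{2}$, and I would select $\theta-\frac{\pi}{2}$ as the one of smallest magnitude. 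To fix the direction of ascent from $\phi=0$ and the clipping to $[\phi_{\min},\phi_{\max}]$ of \eqref{P1:rot_region}, I would examine $g'(\phi)=\sin\big(2(\phi-\theta)\big)$, whose value at $\phi=0$ is $-\sin(2\theta)$: for $\theta>\frac{\pi}{2}$ this is positive, so one raises $\phi$ toward $\theta-\frac{\pi}{2}$, giving $\phi^*=\min(\theta-\frac{\pi}{2},\phi_{\max})$; for $\theta<\frac{\pi}{2}$ it is negative, so one lowers $\phi$ toward $\theta-\frac{\pi}{2}$, giving $\phi^*=\max(\theta-\frac{\pi}{2},\phi_{\min})$.

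The $\beta_1=0$ reduction and the $\sin^2$ maximization are routine; the one point requiring care, and which I expect to be the main obstacle, is the last step—justifying the \emph{clipped} optimizer. Here I would verify that $g(\phi)=\sin^2(\phi-\theta)$ has no interior critical point strictly between $\phi=0$ and the target maximizer $\theta-\frac{\pi}{2}$ (the nearest critical points of $g$ are at $\theta-\pi$ and $\theta$, both outside this segment for $\theta\in(0,\pi)$), so $g$ is strictly monotone there and the monotone decrease of $G(0,\cdot)$ transfers cleanly through the composition. This unimodality is precisely what guarantees that the constrained optimum sits at either $\theta-\frac{\pi}{2}$ or the active range endpoint, and it is where the admissible-range constraint and the merely \emph{generally} decreasing (non-strict, Fresnel-oscillatory) behaviour of $G$ must be reconciled.
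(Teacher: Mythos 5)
Your proposal is correct and follows essentially the same route as the paper, which proves the analogous near-field Lemma~\ref{Le:NN_equal} in Appendix~\ref{App11} (setting $\beta_1=0$ by the evenness of cosine, then maximizing $\beta_2$ by driving $|\phi-\theta|$ toward $\pi/2$ within the admissible range) and simply notes that Lemma~\ref{Le:equalAngle} follows by the same argument. Your explicit check that $\sin^2(\phi-\theta)$ has no interior critical point between $0$ and $\theta-\frac{\pi}{2}$ is a small addition of rigor over the paper's one-line justification, but it does not change the approach.
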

\begin{proof}
     The proof is similar to that of Lemma \ref{Le:NN_equal}, and hence is omitted for brevity.
\end{proof}
\begin{lemma}\label{Le:reduce_corre}
    \emph{In the general case of different spatial angles, there always exists a rotation angle $\phi$, such that $\beta_1 > \tilde{\beta}_1$ and $\beta_2 > \tilde{\beta}_2$, implying:
    \begin{equation}
        \beta_1 > \tilde{\beta}_1,
        \beta_2 > \tilde{\beta}_2 \Longrightarrow G(\beta_1,\beta_2) < G(\tilde{\beta}_1,\tilde{\beta}_2).
    \end{equation}
    }
\end{lemma}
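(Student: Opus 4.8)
My plan is to mirror the near-field argument of Lemma~\ref{Le:NN_equal_dist} (Appendix~\ref{App2}), now specialized to the mixed-field parameters in \eqref{Eq:NF_inter_beta}, and to split the claim into two parts: (i) exhibiting a rotation angle $\phi$ for which $\beta_2>\tilde\beta_2$ and $|\beta_1|>|\tilde\beta_1|$, and (ii) invoking the monotonicity of $G(\cdot)$ recorded in Remark~\ref{Re:Properites} (symmetric in, and nonincreasing in, $|\beta_1|$; nonincreasing in $\beta_2$) to conclude $G(\beta_1,\beta_2)<G(\tilde\beta_1,\tilde\beta_2)$. To organize the algebra I would set $t=\phi-\theta$ and $\Delta=\psi-\theta\neq0$ and apply the sum-to-product identity, giving the compact forms $\beta_2=\tfrac{N}{2}\sqrt{d/r_k}\,|\sin t|$ and the numerator $\cos t-\cos(\Delta-t)=-2\sin\tfrac{\Delta}{2}\sin\!\big(t-\tfrac{\Delta}{2}\big)$ for $\beta_1$, with the fixed-antenna values $\tilde\beta_1,\tilde\beta_2$ corresponding to $t=-\theta$.

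The $\beta_2$ part is the easy half. Since $\beta_2\propto|\sin(\phi-\theta)|$ is maximized at $\phi=\theta\pm\tfrac{\pi}{2}$ and equals $\tilde\beta_2\propto|\sin\theta|$ at $\phi=0$, rotating in the direction $-\operatorname{sgn}(\cos\theta)$ (toward $\theta-\tfrac{\pi}{2}$) strictly increases $\beta_2$ over a one-sided neighborhood of $0$, provided the near-field user is off boresight (i.e.\ $\theta\neq\tfrac{\pi}{2}$). This isolates an interval $\mathcal{I}$ of admissible angles on which $\beta_2>\tilde\beta_2$.

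The crux, and the step I expect to dominate the effort, is to show that $|\beta_1|$ can simultaneously be made to exceed $|\tilde\beta_1|$ on (a subinterval of) $\mathcal{I}$. The essential difficulty is a conservation-type identity: a direct computation gives $\beta_1\beta_2=\tfrac{N}{2}\big(\cos(\phi-\theta)-\cos(\psi-\phi)\big)=-N\sin\tfrac{\Delta}{2}\,\sin(\phi-m)$ with $m=\tfrac{\theta+\psi}{2}$, so the product $|\beta_1|\,|\beta_2|$ is bounded by the $\phi$-independent constant $N|\sin\tfrac{\Delta}{2}|$. Increasing $\beta_2$ therefore tends to drive $|\beta_1|$ down, so the two inequalities cannot be argued independently and must be established jointly. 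I would control this through the ratio $R(\phi)=|\sin(\phi-m)|/|\sin(\phi-\theta)|$ governing $|\beta_1|$, whose derivative equals $\sin(\Delta/2)/\sin^2(\phi-\theta)$ and hence has constant sign; this pins down the direction in which $|\beta_1|$ grows. Matching that direction against the one chosen for $\beta_2$, under the genericity assumption $\theta+\psi\neq\pi$ (equivalently $m\neq\tfrac{\pi}{2}$, so $\sin m<1$ leaves room for the product to rise) and with the case split $\theta<\tfrac{\pi}{2}$ versus $\theta>\tfrac{\pi}{2}$, should furnish the required $\phi$; a first-order expansion at $\phi=0$, refined to second order where the leading term degenerates, would make the joint increase precise.

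With such a $\phi$ in hand, the conclusion is immediate from Remark~\ref{Re:Properites}: a strict increase of $\beta_2$ together with a strict increase of $|\beta_1|$ forces $G(\beta_1,\beta_2)<G(\tilde\beta_1,\tilde\beta_2)$, which is exactly the stated implication. I would finally remark that the entire argument is the $r_i\to\infty$ specialization of Lemma~\ref{Le:NN_equal_dist}, consistent with the reduction highlighted in Remark~\ref{Re:Properites}, so that no estimate beyond the near-field analysis is needed.
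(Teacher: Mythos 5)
Your setup is faithful to what the paper intends (its own ``proof'' is only a pointer to Lemma~\ref{Le:NN_equal_dist} and Appendix~\ref{App2}), and your identification of the product identity $\beta_1\beta_2=\tfrac{N}{2}\bigl(\cos(\phi-\theta)-\cos(\psi-\phi)\bigr)=-N\sin\tfrac{\Delta}{2}\sin(\phi-m)$ as the real obstruction is exactly right. The gap is that the step you defer --- ``matching that direction against the one chosen for $\beta_2$'' --- cannot be completed in general. Your own formulas give the two first-order directions at $\phi=0$: $\beta_2\propto|\sin(\phi-\theta)|$ increases in the direction $\operatorname{sgn}(\theta-\tfrac{\pi}{2})$, while $|\beta_1|\propto|\sin(\phi-m)|/|\sin(\phi-\theta)|$ increases in the direction $\operatorname{sgn}(\sin\tfrac{\Delta}{2})=\operatorname{sgn}(\psi-\theta)$ (the constant-sign derivative you computed). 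These agree only when $\theta$ lies strictly between $\psi$ and the boresight $\tfrac{\pi}{2}$; otherwise they are opposite, and no second-order refinement at $\phi=0$ can manufacture a simultaneous strict increase. The genericity condition you invoke, $\theta+\psi\neq\pi$, is not the relevant dichotomy.

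Worse, in the conflicting regime the required $\phi$ does not exist anywhere, not merely near $0$. Multiplying the two target inequalities yields the necessary condition $|\sin(\phi-m)|>|\sin m|$, and combining this with the monotonicity of the ratio $|\sin(\phi-m)|/|\sin(\phi-\theta)|$ kills every candidate. Concretely, take $\theta=0.3\pi$, $\psi=0.4\pi$ (so $m=0.35\pi$, $\theta+\psi\neq\pi$): the conditions $|\sin(\phi-\theta)|>\sin\theta$ and $|\sin(\phi-m)|>\sin m$ confine $\phi$ to $(-0.3\pi,0)\cup(0.7\pi,\pi)$, and on each of these intervals the ratio is strictly increasing with supremum $\sin m/\sin\theta$ attained only in the limits $\phi\to 0^-$ and $\phi\to\pi^-$; hence $|\beta_1|>|\tilde\beta_1|$ is never strict where $\beta_2>\tilde\beta_2$ holds. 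So the existence claim, under the only reading ($|\beta_1|$ in place of $\beta_1$) that makes the conclusion follow from Remark~\ref{Re:Properites}, fails for this configuration, and your plan cannot close. Your instinct that this is the dominant difficulty was correct --- the same unresolved tension already sits in Appendix~\ref{App2}, whose assertion that $\cos(\theta_k+\theta_i)>0$ whenever $\theta_k+\theta_i<\pi$ is only valid for $\theta_k+\theta_i<\tfrac{\pi}{2}$ --- but the resolution you sketch does not exist at the stated level of generality. A correct argument would have to restrict the geometry (e.g.\ to $\theta$ between $\psi$ and $\tfrac{\pi}{2}$) or work directly with $G$, trading a controlled decrease of $|\beta_1|$ against a sufficiently large increase of $\beta_2$.
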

\begin{proof}
     The proof is similar to that of Lemma \ref{Le:NN_equal_dist}, and hence is omitted for brevity.
\end{proof}

\begin{remark}[Effect of RAs on mixed-field interference.]\label{Re:effectRA}
    \emph{Lemmas \ref{Le:equalAngle} and \ref{Le:reduce_corre} reveal that:
   \begin{itemize}
       \item The additional DoF introduced by RAs enhances mixed-field communication performance by suppressing the mixed-field interference in nearly all scenarios, except when both users are at the boresight direction (i.e., $\theta=\psi=\frac{\pi}{2}$).  At this boresight alignment, rotation equally affects both near-field and far-field steering vectors, making it ineffective at reducing normalized interference.
    \item  When the near-field and far-field users share the same spatial angle but are not at the boresight direction (i.e., $\theta=\psi\neq\frac{\pi}{2}$), performance gains from array rotation depend on the spatial angle, which diminish when the angle approaches the boresight direction from either side.
    \item When the spatial angles of the near-field and far-field users differ  (i.e., $\theta \neq \psi$), the performance improvement is more pronounced, as demonstrated numerically in the following example.
   \end{itemize}
    }
\end{remark}

    \begin{figure*}[t!]
\begin{minipage}{.325\textwidth}
	\centering
\includegraphics[width=1\columnwidth]{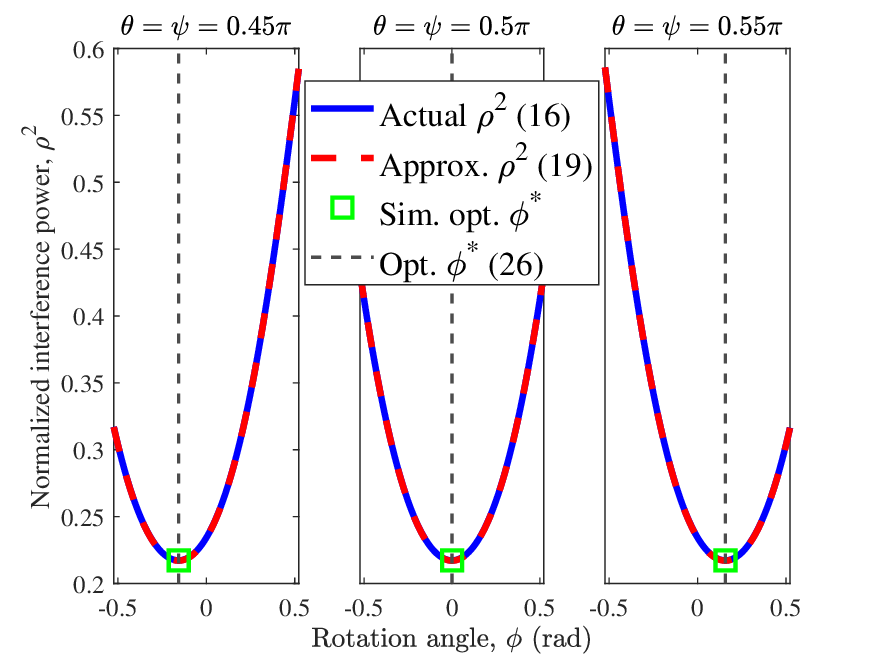}
	\caption{{Normalized mixed-field interference power versus rotation angle.}\label{Fig:verify}} 
    \end{minipage}	
    \hfill
    \begin{minipage}{.325\textwidth}
	\centering
\includegraphics[width=1\columnwidth]{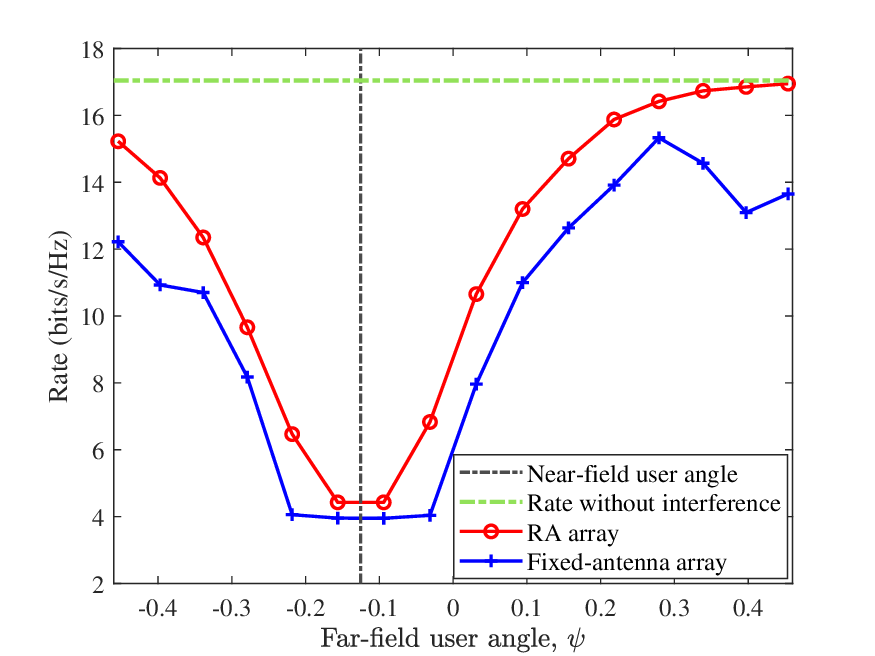}
	\caption{{Rate versus far-field user angle with $r_{\rm N} = 0.03Z_{\rm Ray}$.}\label{Fig:angle}} 
    \end{minipage}	
    \hfill
        \begin{minipage}{.325\textwidth}
\centering
\includegraphics[width=1\columnwidth]{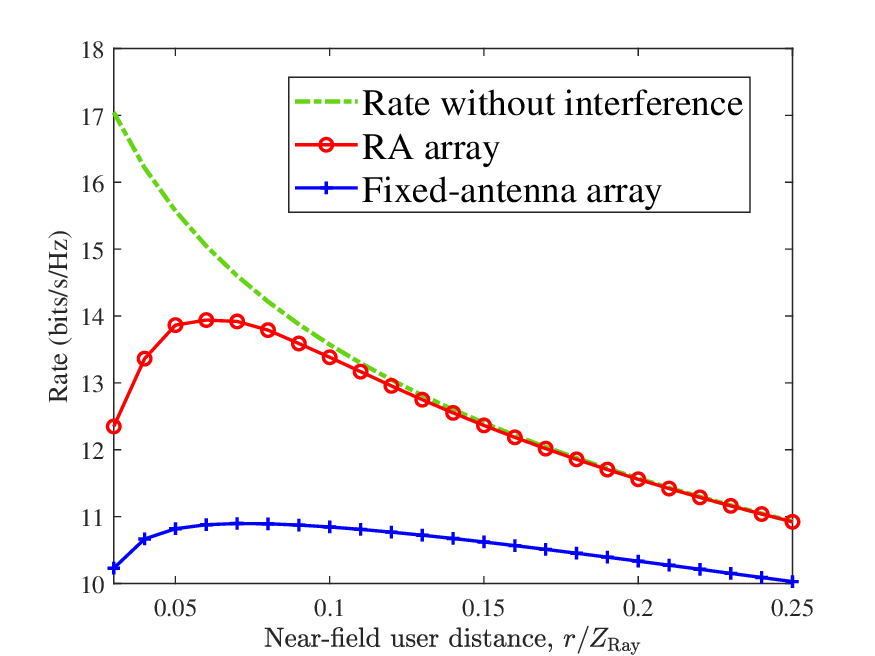}
	\caption{{Rate versus near-field user distance with  $\theta = 0.55\pi$ and $\psi=0.6\pi$.}\label{Fig:dist}} 
    \end{minipage}	
    \vspace{-18pt}
\end{figure*}

\begin{example}[How does the rotation angle affect the mixed-field interference?]
    \emph{
    To highlight the benefits of RAs in enhancing mixed-field communication performance, we first validate the accuracy of the proposed approximation in Theorem \ref{The:NF_inter} and the results derived in Lemma \ref{Le:equalAngle}, which is illustrated in Fig. \ref{Fig:verify}. We observe that the Fresnel-based approximation coincides with the exact normalized interference given in \eqref{Eq:NF_inter}. Furthermore,
    the analytically derived rotation angles align well with numerical results across various spatial-angle settings. Next, 
    we study the effects of the rotation angle on the achievable rate from two perspectives: the spatial angle of the far-field user and the distance of the near-field user. The results are detailed in Figs. \ref{Fig:angle} and \ref{Fig:dist}, with system parameters $N=129$ and $f=24$ GHz.
    It is demonstrated that RA-enabled systems outperform their fixed-antenna counterparts across all tested angles and distances. Specifically, Fig. \ref{Fig:angle} shows that the performance improvement is marginal when the near-field user's angle is close to that of the far-field user. However, as the angular separation increases, the gain becomes more pronounced. When the angular difference is large, the RA-enabled system rate approaches the interference-free bound, consistent with  Remark \ref{Re:effectRA}.
    Furthermore, Fig. \ref{Fig:dist} illustrates that when the spatial angles of the near-field and far-field users differ, RAs provide substantial gains at smaller near-field distances. As the near-field distance reaches approximately $r_{\rm N} = 0.15Z_{\rm Ray}$, the achievable rate reaches the interference-free bound, reflecting the reduced interference brought by rotation.
}
\end{example}

\vspace{-10pt}
\subsection{Extension to General Scenarios}\label{Sec:Extension}
The analytical results presented in Sections \ref{Sec:NNInter} and \ref{Sec:mixed-field} are derived for a ULA with array rotation. However, it is important to note that the analytical framework can be readily extended to more general scenarios, such as a ULA with 3D rotation \cite{xie2025thz} or a UPA element-wise 3D rotation \cite{zheng2025rotatable}. These extensions are outlined below.
\subsubsection{ULA with 3D rotation} In this scenario, the orientation of the ULA is controlled by three rotation angles, denoted as $[\alpha,\beta,\gamma]^T$, where $\alpha$, $\beta$, and $\gamma$ correspond to rotations around the $x$-, $y$-, and $z$-axes, respectively. By varying one rotation angle while keeping the others fixed, the impact on mixed-field interference can be theoretically characterized. In particular, closed-form expressions based on the Fresnel integrals can be derived to quantify how each rotation angle individually affects mixed-field communication performance.
\subsubsection{UPA with element-wise 3D rotation} For a UPA, each antenna element can be independently oriented in 3D space, with its orientation defined by a pair of deflection angles: the zenith and azimuth angles. Similarly, the influence of these orientation parameters on mixed-field communication performance can similarly be analyzed in closed form by using the Fresnel integrals. For example, by fixing the azimuth angle and varying the zenith angle of a specific element, or vice versa, one can assess how individual angle affects the overall system behavior and performance.

More importantly, it is worth noting that, while the assumption of \emph{uniform subarray rotation} facilitates both analytical tractability and practical implementation, it compromises design flexibility
 compared to the \emph{subarray-wise} rotation. To fully exploit the performance gains of RAs, we in the next delve into developing the solution for the general case with subarray-wise rotation.

 \vspace{-10pt}
\section{Proposed Solution to Problem (P1)}\label{Sec:General}
In this section, we present an efficient \emph{double-layer} algorithm to solve problem (P1) for the general case with \emph{subarray-wise} rotation.
Specifically, we reformulate problem (P1) equivalently by
decomposing it into two subproblems: an inner problem for optimizing the power allocation $\{P_{{\rm N},k}\}$ for fixed rotation angles $\boldsymbol{\phi}$, and an outer problem for optimizing the rotation angles $\boldsymbol{\phi}$. These subproblems are detailed below.
\vspace{-9pt}
\subsection{Inner Problem: Power Allocation Optimization}
For any feasible rotation angles $\boldsymbol{\phi}$,
the inner problem aims to optimize the power allocation $\{P_{{\rm N},k}\}$ by minimizing the negative sum-rate, formulated as:
\begin{subequations}
	\begin{align}
		({\bf P2}):~~~\min_{\substack{\{P_{{\rm N},k}\}} }  ~~	-\sum_{k \in \mathcal{K}}~R_{k}(\{P_{{\rm N},k}\}),
		~~~~\text{s.t.}~~~
		\eqref{P1:pow_cons}.\nn
	\end{align}
 \end{subequations}
The primary challenge in solving problem (P2) lies in the non-convexity of the objective function. To address this, we rewrite the objective function as a difference of two convex functions: $-\sum_{k\in\mathcal{K}}R_k(\{P_{{\rm N},k}\}) = A_1-B_1$, where $A_1$ and $B_1$ are given by \begin{align}\label{Eq:expression1}
    &A_1 =- \sum_{k\in\mathcal{K}}\log_2\bigg(\sum_{i\in\mathcal{K}}P_{{\rm N},i} N|h_{{\rm N},k}|^2\rho^2_{\rm NN}(\boldsymbol{\phi},\theta_k,r_k,\theta_i,r_i)
    \nn\\&+\sum_{m\in\mathcal{M}}P_{{\rm F},m}N|h_{{\rm N},k}|^2\rho^2_{\rm NF}(\boldsymbol{\phi},\theta_k,r_k,\psi_m)+\sigma^2_{k}\bigg),\nn\\
   &B_1 =- \sum_{k\in\mathcal{K}}\log_2\bigg(\sum_{i\in\mathcal{K}\setminus\{k\}}P_{{\rm N},i}N|h_{{\rm N},k}|^2\rho^2_{\rm NN}(\boldsymbol{\phi},\theta_k,r_k,\theta_i,r_i)
    \nn\\&+\sum_{m\in\mathcal{M}}P_{{\rm F},m}N|h_{{\rm N},k}|^2\rho^2_{\rm NF}(\boldsymbol{\phi},\theta_k,r_k,\psi_m)+\sigma^2_{k}\bigg).
\end{align}
 Both functions, $A_1$ and $B_1$, are jointly convex in terms of $\{P_{{\rm N},k}\}$. To proceed,  we employ the SCA method, iteratively constructing a convex upper bound for the objective function. For effective implementation of SCA, we develop a global underestimator for $B_1$, using the superscript $(t)$ to indicate the iteration index of the optimization variables. Specifically, for any feasible point $\mathbf{p}^{(t)}\triangleq \left\{P^{(t)}_{{\rm N},k}\right\}^{K}_{k=1}$, we derive a lower bound for $B_1$ through its first-order Taylor approximation, expressed as 
 \begin{equation}
     B_1(\mathbf{p})\ge B_1\left(\mathbf{p}^{(t)}\right)+\sum_{k\in\mathcal{K}}\nabla_{P_{{\rm N},k}}B_1\left(\mathbf{p}^{(t)}\right)\left(P_{{\rm N},k}-P_{{\rm N},k}^{(t)}\right),
     \end{equation}
where the gradient of $B_1$ in terms of $\{P_{{\rm N},k}\}$ is given by \eqref{Eq:gradient}, as shown on the bottom of this page. The resultant problem is convex with respect to $\{P_{{\rm N},k}\}$, and thus can be efficiently solved by standard convex solvers such as CVX.
\begin{figure*}[b]
       		\hrulefill
\begin{equation}\label{Eq:gradient}
   \nabla_{P_k}B_1(\mathbf{p}) =- \frac{1}{\ln2}\sum_{j\neq k}\frac{N|h_{{\rm N},j}|^2\rho^2_{\rm NN}(\boldsymbol{\phi},\theta_k,r_k,\theta_j,r_j)}{\sum_{i\in\mathcal{K}\setminus\{j\}}P_{{\rm N},i}N|h_{{\rm N},j}|^2\rho^2_{\rm NN}(\boldsymbol{\phi},\theta_j,r_j,\theta_i,r_i)+\sum_{m\in\mathcal{M}}P_{{\rm F},m}N|h_{{\rm N},j}|^2\rho^2_{\rm NF}(\boldsymbol{\phi},\theta_j,r_j,\psi_m)+\sigma^2_{j}}.
\end{equation}
\end{figure*}
\vspace{-6pt}
\subsection{Outer Problem: Rotation Angle Optimization} 
With the power allocations optimized, the outer problem aims to maximize the sum-rate by optimizing the rotation angles $\boldsymbol{\phi}$. The outer problem is formulated as: 
\begin{subequations}
	\begin{align}
		({\bf P3}):~~~\max_{\substack{\boldsymbol{\phi}} }  ~~	\sum_{k \in \mathcal{K}}~R_{k}(\boldsymbol{\phi})
		~~~\text{s.t.}
       ~~~	\eqref{P1:rot_mutual},\eqref{P1:rot_region}.\nn
	\end{align}
 \end{subequations}
Solving problem (P3) directly  is difficult due to the lack of a closed-form expression for the power allocation and the intricate coupling of 
$\boldsymbol{\phi}$ in both near- and far-field steering vectors. Moreover, the search space for the rotation angles, defined as $[\phi_{q,{\rm min}},\phi_{q,{\rm max}}]^Q$, grows exponentially with the number of subarrays $Q$, leading to
prohibitively high computational complexity for an exhaustive search of
the optimal rotation angles.
To overcome these challenges, we adopt an efficient population-based optimization method, namely, PSO, to collectively explore the $Q$-dimensional solution space and determine the optimal rotation angles $\boldsymbol{\phi}$. The detailed procedures are outlined below.

Specifically, to implement the PSO-based approach, we initialize $S$ particles randomly, forming the initial swarm $\mathcal{S}^{(0)}=\{\boldsymbol{\phi}^{(0)}_1,\boldsymbol{\phi}^{(0)}_2,\cdots,\boldsymbol{\phi}^{(0)}_S\}$.
Each particle $s\in\mathcal{S}$ represents a feasible configuration of $Q$ rotation angles, expressed as:
\begin{equation}
    \boldsymbol{\phi}^{(0)}_s=[\phi^{(0)}_{s,1},\phi^{(0)}_{s,2},\cdots,\phi^{(0)}_{s,Q}]^T, 
\end{equation}
where $\phi^{(0)}_{s,q}$ denotes the rotation angle of the $q$-th subarray in the $s$-th particle, constrained to the admissible range $\phi^{(0)}_{s,q}\in [\phi_{q,{\rm min}},\phi_{q,{\rm max}}]$ for $1\le s \le S$ and $1\le q\le Q$. The initial velocity of each particle $s$ is defined as:
\begin{equation}
    \mathbf{v}^{(0)}_{s} = [v^{(0)}_{s,1},v^{(0)}_{s,2},\cdots,v^{(0)}_{s,Q}]^T.
\end{equation}
Let $\boldsymbol{\phi}_{s,{\rm p}}$ denote the individual best position of the $s$-th particle and $\boldsymbol{\phi}_{{\rm g}}$ the global best position across all particles. At each iteration $t$, the velocity and position of each particle are updated according to the PSO update rules:
\begin{align}
   {\mathbf{v}}^{(t+1)}_s &= \omega {\mathbf{v}}^{(t)}_s+c_1\tau_1(\boldsymbol{\phi}_{s,{\rm p}}-{\boldsymbol{\phi}}^{(t)}_s)+c_2\tau_2(\boldsymbol{\phi}_{{\rm g}}-{\boldsymbol{\phi}}^{(t)}_s), \\
   {\boldsymbol{\phi}}^{(t+1)}_s&={\boldsymbol{\phi}}^{(t)}_s+{\mathbf{v}}^{(t+1)}_s,
\end{align}
where $\omega$ is the inertia weight of the particle search.  Here,
$c_1$ and $c_2$ denote the individual and global learning factors, indicating the step sizes of each particle moving towards the individual and global best positions, respectively, while $\tau_1 \in [0,1]$ and $\tau_2 \in [0,1]$ are random values that introduce the search randomness. Moreover, to enforce constraint \eqref{P1:rot_mutual} of problem (P1), which requires that any two subarrays form an obtuse angle, rotation angles violating the admissible range are adjusted as follows:
\begin{align}
    [{\boldsymbol{\phi}}^{(t)}_s]_q=
    \begin{cases}
       \phi_{q,{\rm min}},  &\text{if} [{\boldsymbol{\phi}}^{(t)}_s]_q < \phi_{q,{\rm min}},\\ 
       \phi_{q,{\rm max}}, & \text{if} [{\boldsymbol{\phi}}^{(t)}_s]_q > \phi_{q,{\rm max}},\\
        [{\boldsymbol{\phi}}^{(t)}_s]_q,&\text{otherwise}.
    \end{cases}
\end{align}
The fitness of each particle is evaluated using the fitness function:
\begin{equation}
    f_{\rm fit}(\boldsymbol{\phi}^{(t)}_{s}) =  \sum_{k \in \mathcal{K}}~R_{k}(\boldsymbol{\phi}^{(t)}_{s})-\xi\mathbb{P}(\boldsymbol{\phi}^{(t)}_{s}),
\end{equation}
where $R_{k}(\boldsymbol{\phi})$ represents the sum-rate of near-field users, as defined in problem (P3). To penalize violations of constraint \eqref{P1:rot_mutual}, we introduce a penalty term to the fitness function, consisting of a penalty coefficient $\xi$ as well as the number of subarray pairs that violates the constraint \eqref{P1:rot_mutual}, which is given by\footnote{It is worth noting that, in practice, the admissible rotation range should be chosen appropriately by considering the mechanical constraints and the implementation costs. Specifically, given the sector-based nature of wireless coverage, users are typically located within a sector spanning from $\frac{\pi}{3}$ to $\frac{2\pi}{3}$ (i.e., tri-sector configurations) \cite{shao6d}. Accordingly, we in this paper adopt the same setting, with the rotation angles confined to $[-\frac{\pi}{6},\frac{\pi}{6}]$ in Section \ref{Sec:NR}. This choice naturally ensures that the constraint in \eqref{P1:rot_mutual} is satisfied.}
\begin{equation}
    \mathbb{P}(\boldsymbol{\phi}^{(t)}_{s}) = \sum_{p\in\mathcal{Q}}\sum_{q\in\mathcal{Q}\setminus{\{p\}}}\mathbb{I}\left(\left|[\boldsymbol{\phi}^{(t)}_{s}]_p-[\boldsymbol{\phi}^{(t)}_{s}]_q \right|<\frac{\pi}{2}\right),
\end{equation}
where $\mathbb{I}$ is an indicator function that  returns $1$
if the condition is true (i.e., the angle between subarrays $p$ and $q$ is less than $\pi/2$) and $0$ otherwise.
At each iteration, the fitness of each particle is computed, and the individual and global best positions are updated. This process continues until convergence is achieved.

\vspace{-9pt}

\subsection{Convergence and Computational Complexity}
   \subsubsection{Convergence}
   The proposed algorithm guarantees convergence, and the proof is provided as follows. At each iteration, the inner subproblem (P2) is solved optimally, and its objective value provides an upper bound for problem (P1). For the outer subproblem (P3), the fitness value of the global best particle in the evolving population $\mathcal{S}$, denoted $f_{\text{fit}}(\boldsymbol{\phi}^{(t)}{\text{g}})$, is non-decreasing across iterations, i.e., $f_{\text{fit}}(\boldsymbol{\phi}^{(t+1)}_{\text{g}}) \geq f_{\text{fit}}(\boldsymbol{\phi}^{(t)}_{\text{g}})$.
   Consequently, the objective value of problem (P1) is non-decreasing with each iteration. Given the finite total transmit power, convergence to a stationary point is ensured \cite{zhou2025rotatable}. 
   \subsubsection{Computational complexity}
   Solving the inner subproblem (P2) requires $I_2$ iterations and $K$ optimization variables, resulting in a complexity of $\mathcal{O}(I_2 K^{3.5})$ \cite{zhang2023swipt,dai2025rotatable}. For the outer subproblem, the PSO algorithm evolves a population of $S$ particles over $T$ iterations. Thus, the total computational complexity of the proposed double-layer algorithm for solving problem (P1) is $\mathcal{O}(S T I_2 K^{3.5})$.


\vspace{-9pt}
\section{Numerical Results}\label{Sec:NR}
In this section, we demonstrate the advantages of incorporating RAs into the mixed-field communication systems and the effectiveness of the proposed joint design. Unless otherwise stated, the system parameters are specified as: $N=129$, $f=24$ GHz, $K=3$, $M=2$, $S=100$, $T=100$, $L_{{\rm N},k} =L_{{\rm F},m} =3,\forall{k\in\mathcal{K},m\in\mathcal{M}}$, $\sigma^2_k=-70$ dBm, $\forall{k\in\mathcal{K}}$, and $C_{\phi_q} =[-\frac{\pi}{6},\frac{\pi}{6}],\forall{q\in\mathcal{Q}}$. The $K$ near-field users are uniformly distributed within a region defined by a distance range of $[0.03Z_{\rm Ray},0.2Z_{\rm Ray}]$ and a spatial angle range of $[\frac{\pi}{3},\frac{2\pi}{3}]$. The $M$ legacy far-field users are uniformly positioned within the same spatial angle range. The maximum BS transmit power is set to $P=30$ dBm, and the power allocated to each far-field user is $P_{{\rm F},m}=30$ dBm, $\forall{m}\in\mathcal{M}$.
For performance comparison, we consider the following benchmark schemes:

\begin{itemize}
   \item \emph{Fixed antenna with zero forcing (FA+ZF):} The rotation angle is fixed at zero, i.e., $\phi=0$, while transmit beamforming vectors are designed via the ZF technique.
    \item \emph{Fixed antenna with optimized power allocation (FA+OPA):} The rotation angle is fixed at zero, i.e., $\phi=0$, with power allocation optimized using the proposed approach.
    \item \emph{Rotatable antenna with ZF digital beamforming (RA+ZF):} The rotation angle is optimized using the proposed algorithm, while ZF beamforming is applied. 
    \item \emph{Rotatable antenna with equal power allocation (RA+EPA):} The rotation angle is optimized via the proposed algorithm, while power is equally distributed among near-field users. 
\end{itemize}

For the proposed joint design, we consider two cases corresponding to array rotation and subarray rotation, respectively, for performance comparison:  \emph{proposed joint design, $Q=5$} with $5$ RA subarrays and \emph{proposed joint design, $Q=1$} with $1$ RA subarray only.



\vspace{-9pt}
\subsection{Sum-rate Versus Maximum Transmit Power}
In Fig. \ref{Fig:BStrans}, we illustrate the effect of the transmit power on the sum-rate performance of various schemes. First, it is observed that the proposed scheme closely matches the performance of its combination with ZF digital beamforming and consistently outperforms the benchmark schemes, highlighting the effectiveness of RAs in enhancing the mixed-field communication performance. Second, compared to the RA+EPA scheme, which optimizes rotation alone, joint optimization of rotation and power allocation yields a higher sum-rate, underscoring the efficacy of the integrated approach. Third, subarray-wise rotation delivers significant performance gains over the array-wise rotation, owing to its superior interference mitigation capabilities enabled by greater design flexibility. Additionally, one interesting observation is that the FA+ZF scheme underperforms the proposed method across all transmit power levels. This is because, while FA+ZF effectively nullifies the near-field inter-user interference, it simultaneously suffers from the complex mixed-field interference caused by far-field users, resulting in notable performance degradation, especially when the mixed-field interference is pronounced.

\vspace{-9pt}
\subsection{Sum-rate Versus Power to Each Far-field User}
In Fig. \ref{Fig:Fartrans}, we examine the effect of power allocated to each far-field user, i.e., $P_{{\rm F},m}$, on the sum-rate. A clear trend emerges: all schemes exhibit a monotonic decrease in sum-rate as $P_{{\rm F},m}$ increases due to growing interference from far-field users. Notably, the proposed method with subarray rotation ($Q=5$) consistently outperforms the benchmark schemes under different values of $P_{{\rm F},m}$. 
In contrast, the proposed method with array rotation ($Q=1$) underperforms the FA+ZF scheme when $P_{{\rm F},m}\le23$ dBm but outperforms it when $P_{{\rm F},m}$ exceeds this threshold. This shift occurs because the FA+ZF approach effectively mitigates the near-field inter-user interference and exhibits superior performance when the mixed-field interference is minimal. However, as the growing mixed-field intensifies with higher far-field power, the performance of the FA+ZF scheme deteriorates significantly, highlighting the necessity of interference mitigation in such scenarios. Notably, we see that the largest performance gap between the proposed method (both array and subarray rotation) and benchmark schemes occurs at lower $P_{{\rm F},m}$ values, e.g., $P_{{\rm F},m}\le$ 20 dBm, demonstrating the efficacy of RAs in mitigating both near-field and mixed-field interference, especially in moderate interference conditions.
\begin{figure}[t]
	\centering
\includegraphics[width=0.45\textwidth]{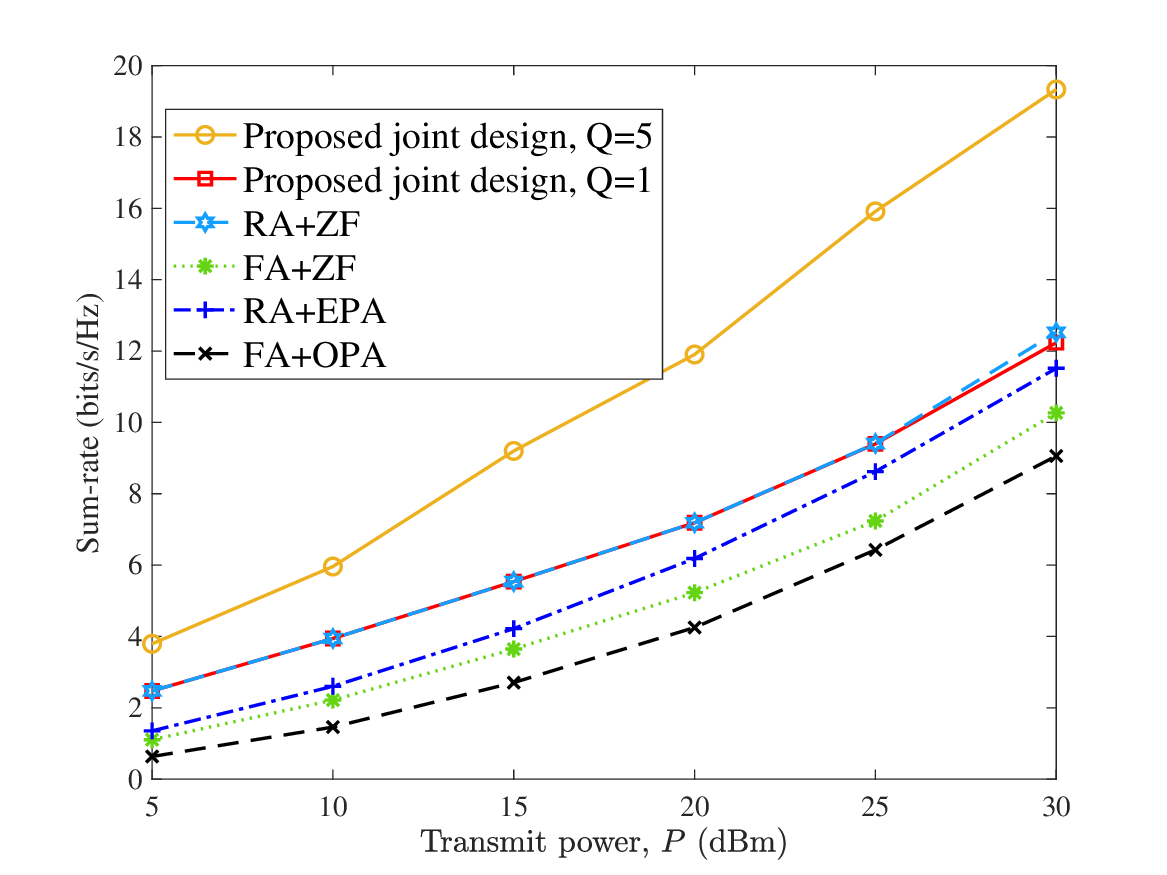}
	\caption{{Sum-rate versus transmit power.}}\label{Fig:BStrans}\vspace{-15pt}
\end{figure}
\begin{figure}[t]
	\centering
\includegraphics[width=0.45\textwidth]{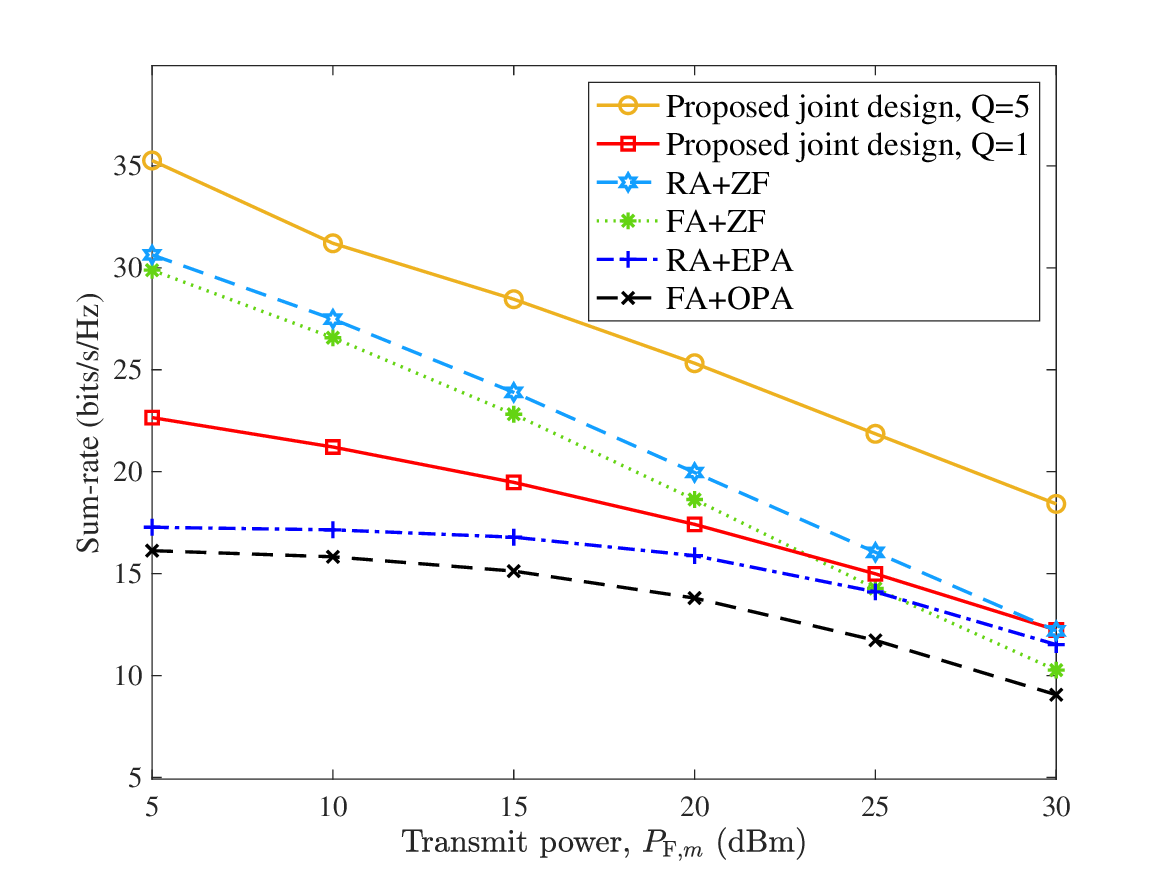}
	\caption{{Sum-rate versus power to each far-field user.}} \label{Fig:Fartrans}\vspace{-15pt}
\end{figure}
\begin{figure}[t]
	\centering
\includegraphics[width=0.45\textwidth]{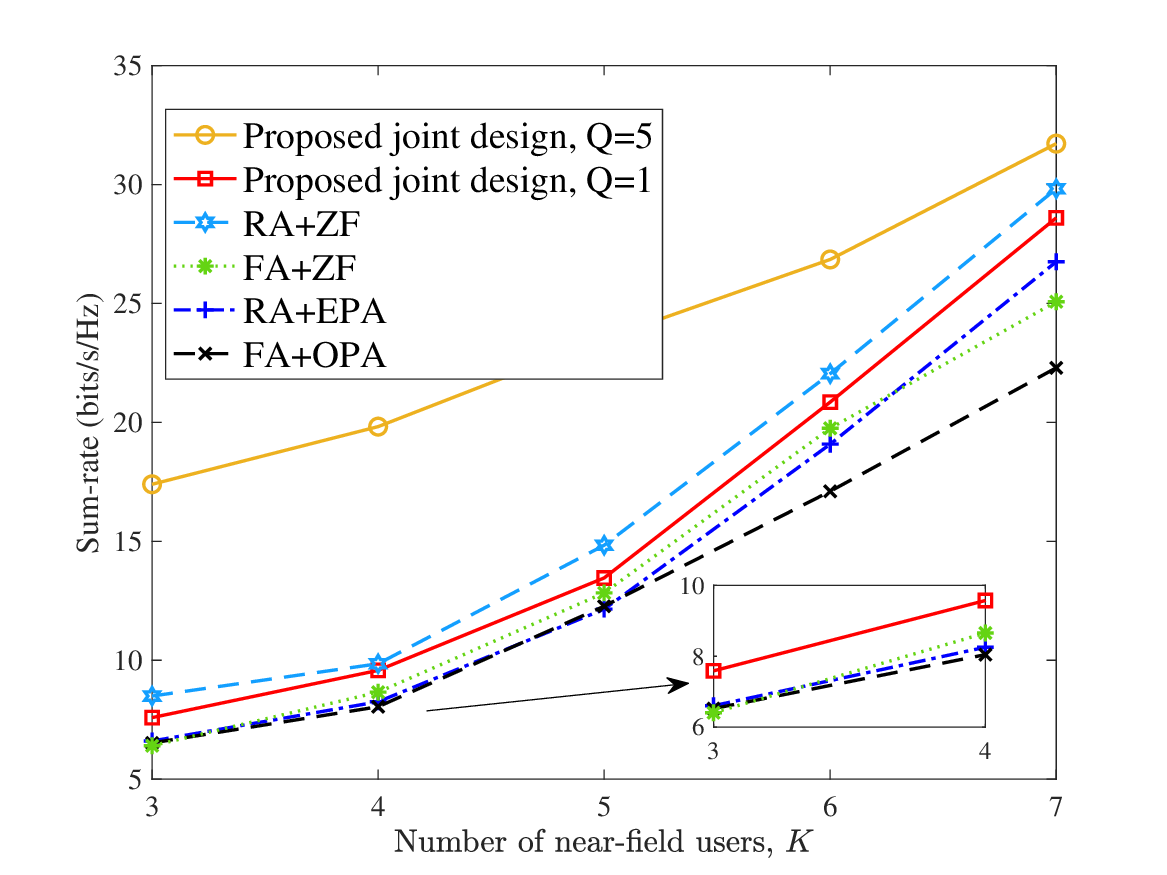}
	\caption{Sum-rate versus number of near-field users.} \label{Fig:NumNU}\vspace{-15pt}
\end{figure}
\begin{figure}[t]
	\centering
\includegraphics[width=0.45\textwidth]{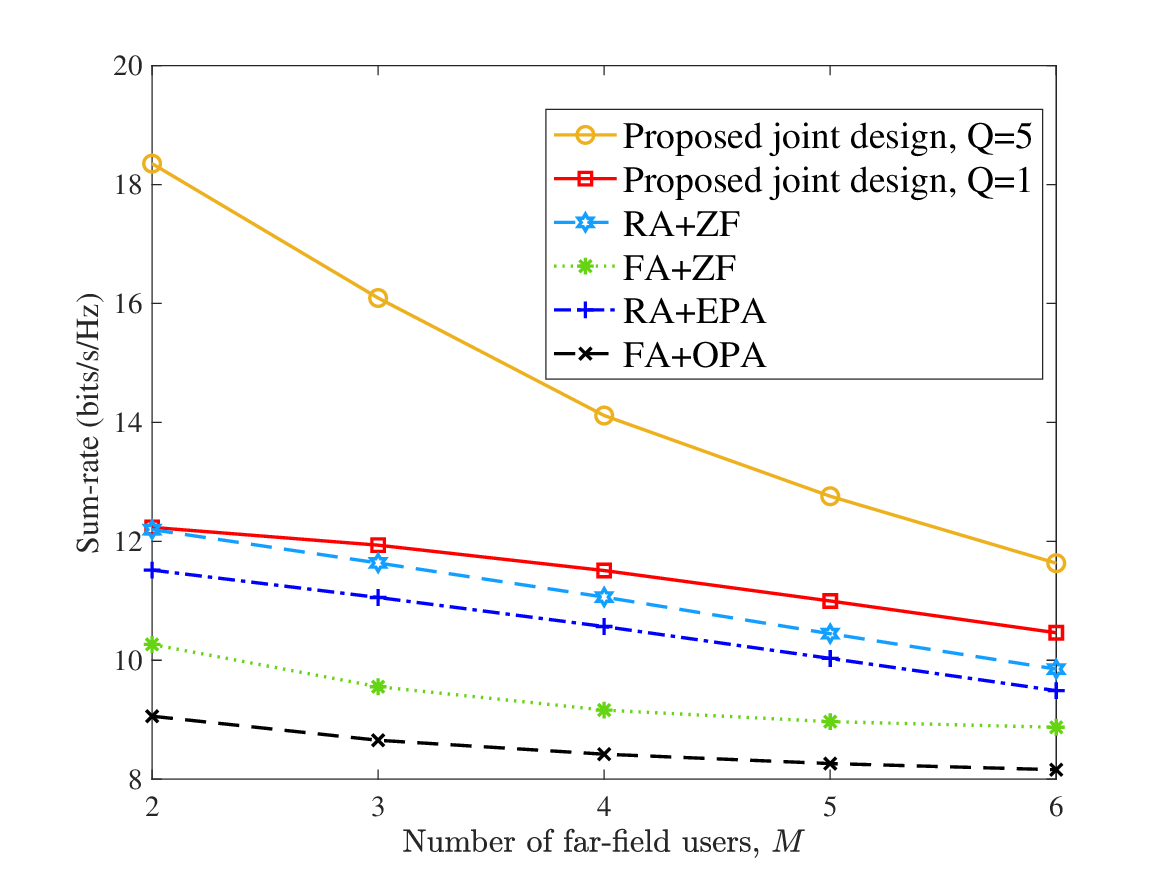}
	\caption{{Sum-rate versus number of far-field users.}}\label{Fig:NumFU}\vspace{-15pt}
\end{figure}
\vspace{-9pt}
\subsection{Sum-rate Versus Number of Near-field Users}
In Fig. \ref{Fig:NumNU}, we compare the sum-rate of different schemes versus the number of near-field users $K$. Beyond the initial three near-field users, additional near-field users are sequentially sampled from a region defined by a distance range of $[0.03Z_{\rm Ray},0.2Z_{\rm Ray}]$ and a spatial angle range of $[\frac{\pi}{3},\frac{2\pi}{3}]$. First, it is observed that the sum-rate of all schemes increases with the number of near-field users. This is because, despite the presence of the near-field and mixed-field interference, the system sum-rate is dominated by the achievable rates of near-field users, resulting in a higher sum-rate as $K$ increases. Second, our joint design with subarray rotation significantly outperforms both the proposed method with array rotation and the benchmark schemes. Specifically, when the number of near-field users is moderate, i.e., $K\le5$, the near-field interference remains manageable, and the enhanced design flexibility provided by subarray rotation yields significant performance gains over array rotation and other benchmarks. This highlights the superior design flexibility of the proposed scheme with $Q=5$ subarrays in mitigating near-field interference compared to a single subarray. However, the performance gap narrows as the number of near-field users further grows, i.e., $K>5$. This is because even with $Q=5$ subarrays, the advantage of the proposed method diminishes in scenarios with very high near-field user density, where the increasing interference complexity demands greater design flexibility for effective interference mitigation.

 \vspace{-9pt}
\subsection{Sum-rate Versus Number of Far-field Users}
In Fig. \ref{Fig:NumFU}, we plot the sum-rate of various schemes against the number of far-field users $M$. Beyond the initial two far-field users, additional far-field users are sampled from a spatial angle range of $[\frac{\pi}{3},\frac{2\pi}{3}]$. 
First, it is seen that the sum-rate of all schemes decreases monotonically with the number of far-field users. This is because the newly added far-field users progressively introduce severe mixed-field interference that degrades the performance of near-field users. Second, the proposed method with subarray rotation shows the most significant performance improvement in the small-$M$ regime. However, this advantage saturates as $M$ increases, e.g., $M\ge5$. The reason is that with a large number of far-field users, the mixed-field interference becomes increasingly complicated, thereby posing significant challenges to interference mitigation with respect to subarray rotation.

\vspace{-8pt}
 \section{Conclusion}\label{Sec:con}
In this paper, we proposed leveraging RAs to mitigate near-field and mixed-field interference by exploiting the spatial DoF enabled by RA array rotation. For the special case where all subarrays share a uniform rotation angle, we derived the closed-form expressions for the rotation-dependent near-field interference and the rotation-dependent mixed-field interference using the Fresnel integrals. In particular, we confirmed that array rotation effectively suppresses complex near-field and mixed-field interference, thus achieving substantial performance improvements in mixed-field communication scenarios. For the general case with subarray-wise rotation,  we have developed a double-layer algorithm, where the inner layer optimizes power allocation via SCA, and the outer layer determines subarray rotation angles via PSO. Numerical results demonstrated that the RA-enabled system substantially outperforms conventional fixed-antenna configurations and underscore the superior 
effectiveness of our proposed joint design compared to benchmark schemes. 
\vspace{-9pt}
\begin{appendices}
	\section{}\label{App1}
    First, we rewrite \eqref{Eq:NF_inter} as:
    \begin{align}\label{Eq:new}
         \rho_{\rm NN}(\phi,\theta_k,r_k,\theta_i,r_i) =\frac{1}{N}\left|\sum_{n=-\tilde{N}}^{\tilde{N}}e^{j\pi(A_1n+A_2)^2}\right|.
    \end{align}
    where $A_1 = \sqrt{d\left|\frac{\sin^2{(\phi-\theta_k)}}{2r_k}-\frac{\sin^2{(\phi-\theta_i)}}{2r_i}\right|}$ and $A_2= \frac{1}{2A_1}\left(\cos{(\phi-\theta_i)}- \cos{(\phi-\theta_k)}\right)$.

    The summation in \eqref{Eq:new}  can be approximated by the following integral:
    \begin{align}
         \rho_{\rm NN}&(\phi,\theta_k,r_k,\theta_i,r_i)  \stackrel{\left(a_1\right)}{\approx} \frac{1}{N}\left|\int_{-\tilde{N}}^{\tilde{N}} e^{j \pi\left(A_1 n+A_2\right)^2} \mathrm{d} n\right|,\nn\\
        &\stackrel{\left(a_2\right)}{=}\frac{1}{\sqrt{2}A_1N}\left|\int_{-\sqrt{2}A_1\tilde{N}+\sqrt{2}A_2}^{\sqrt{2}A_1\tilde{N}+\sqrt{2}A_2} e^{j \pi\frac{1}{2}t^2} \mathrm{d} t\right|,
    \end{align}
    where $(a_1)$ becomes increasingly accurate as $N\rightarrow\infty$ based on the Riemann integral, and $(a_2)$ is obtained  by substituting $(A_1n+A_2)^2=\frac{1}{2}t^2$.
   Then, utilizing the Fresnel integrals, we have 
   \begin{align}
        \!\! & \rho_{\rm NN}(\phi,\theta_k,r_k,\theta_i,r_i)
        =\frac{1}{\sqrt{2}A_1N}\left|\int_{-\sqrt{2}A_1\tilde{N}+\sqrt{2}A_2}^{\sqrt{2}A_1\tilde{N}+\sqrt{2}A_2} e^{j \pi\frac{1}{2}t^2} \mathrm{d} t\right|\nn\\
        \!\!&\!\!\stackrel{\left(a_3\right)}{=}\left|\frac{{C}(\beta_1+\beta_2)-{C}(\beta_1-\beta_2)+j({S}(\beta_1+\beta_2)-{S}(\beta_1-\beta_2))}{2\beta_2}\right|\nn\\
        \!\!&\!\!\stackrel{\left(a_4\right)}{=}\left|\frac{\widehat{C}(\beta_1,\beta_2)+j\widehat{S}(\beta_1,\beta_2)}{2\beta_2}\right|\triangleq G(\beta_1,\beta_2),
   \end{align}
    where $(a_3)$ is obtained by letting
    \begin{equation}
        \beta_1 =\frac{\left(\cos{(\phi-\theta_k)}-\cos{(\phi-\theta_i)}\right)}{\sqrt{d\left|\frac{\sin^2{(\phi-\theta_k)}}{r_k}-\frac{\sin^2{(\phi-\theta_i)}}{r_i}\right|}},
\end{equation}
and
    \begin{equation}
        \beta_2=\frac{N}{2} \sqrt{d\left|\frac{\sin^2{(\phi-\theta_k)}}{r_k}-\frac{\sin^2{(\phi-\theta_i)}}{r_i}\right|},
    \end{equation} 
    and $(a_4)$
     is obtained by defining  $\widehat{C}(\beta_1,\beta_2) = {C}(\beta_1+\beta_2)-{C}(\beta_1-\beta_2)$ and $\widehat{S}(\beta_1,\beta_2) = {S}(\beta_1+\beta_2)-{S}(\beta_1-\beta_2)$. 
   This completes the proof of Theorem \ref{The:NF_inter}.
   \vspace{-9pt}
    \section{}\label{App11}
    When $\theta_k=\theta_i$, the normalized near-field interference depends only on $\beta_2$, i.e., $ \rho_{\rm NN}(\phi,\theta,r_k,\theta,r_i) = G(0,\beta_2)$, where
\begin{equation}\label{Eq:NN_equal}
    \beta_2={N} \sqrt{d\left|\frac{\sin^2{(\phi-\theta)}}{r_k}-\frac{\sin^2{(\phi-\theta)}}{r_i}\right|}.
\end{equation} 
Based on the properties of the function $G(\cdot)$ in Remark \ref{Re:Properites}, the normalized near-field interference generally decreases as $\beta_2$ increases. From \eqref{Eq:NN_equal}, we see that the rotation angle influences the value of the sine function that attains its maximum value of one at $\frac{\pi}{2}$. Therefore, we have
\begin{itemize}
    \item When $\theta=\frac{\pi}{2}$, the optimal rotation angle $\phi$ is zero for achieving the maximum value of $\beta_2$.
    \item While for the case $\theta\neq\frac{\pi}{2}$, the optimal rotation angle is set to let $|\phi-\theta|$ as close to $\frac{\pi}{2}$ as possible within its admissible rotation range, i.e.,
            \begin{equation}
            \phi^* = \begin{cases}
            \min(\theta-\frac{\pi}{2},\phi_{\rm max}), & {\rm if}~ \theta>\frac{\pi}{2}, \\
                 \max(\theta-\frac{\pi}{2},\phi_{\rm min}), & {\rm if}~ \theta<\frac{\pi}{2}.
            \end{cases}
        \end{equation}
\end{itemize}
Combining the above leads to the desired result in Lemma \ref{Le:NN_equal}.

   \vspace{-9pt}
  \section{}\label{App2}
Without loss of generality, consider the case where   $\theta_i,\theta_k\in(0,\frac{\pi}{2})$. We begin with the case $\theta_i<\theta_k$, where $\phi\in(-\theta_i,\theta_i)$.
    Under the condition $r_i=r_k=r$ and using trigonometric identities, we express the four parameters as:
\begin{align}
\beta_1&=\frac{\left(\cos{(\phi-\theta_i)}-\cos{(\phi-\theta_k)}\right)}{\sqrt{\frac{d}{r}\left|\sin{(2\phi-\theta_k-\theta_i)}\right|\sin{(\theta_k-\theta_i)}}}, \nn\\
\beta_2&={N} \sqrt{\frac{d}{r}\left|\sin{(2\phi-\theta_k-\theta_i)}\right|\sin{(\theta_k-\theta_i)}},\nn\\
\bar{\beta}_1&=\frac{\left(\cos{\theta_i}-\cos{\theta_k}\right)}{\sqrt{\frac{d}{r}\sin{(\theta_k+\theta_i)}\sin{(\theta_k-\theta_i)}}},\nn\\
\bar{\beta}_2&={N} \sqrt{\frac{d}{r}\sin{(\theta_k+\theta_i)}\sin{(\theta_k-\theta_i)}}.
    \end{align}
To verify $\beta_1>\bar{\beta}_1$ and $\beta_2>\bar{\beta}_2$, we define:
    \begin{align}
        &\Delta_1(\phi) = \frac{\left(\cos{(\phi-\theta_i)}-\cos{(\phi-\theta_k)}\right)}{\sqrt{\left|\sin{(2\phi-\theta_k-\theta_i)}\right|}}-\frac{\left(\cos{\theta_i}-\cos{\theta_k}\right)}{\sqrt{|\sin{(\theta_k+\theta_i)}|}}, \nn\\
        &\Delta_2(\phi) = \left|\sin{(2\phi-\theta_k-\theta_i)}\right|-|\sin{(\theta_k+\theta_i)}|.
    \end{align}
It can be easily verified that $\beta_1>\bar{\beta}_1$ when $\Delta_1(\phi)>0$, and $\beta_2>\bar{\beta}_2$ when $\Delta_2(\phi)>0$. Since $\theta_i<\theta_k$, we rewrite:
    \begin{align}
        &\Delta_1(\phi) = \frac{\left(\cos{(\phi-\theta_i)}-\cos{(\phi-\theta_k)}\right)}{\sqrt{-\sin{(2\phi-\theta_k-\theta_i)}}}-\frac{\left(\cos{\theta_i}-\cos{\theta_k}\right)}{\sqrt{\sin{(\theta_k+\theta_i)}}}, \nn\\
        &\Delta_2(\phi) = -\sin{(2\phi-\theta_k-\theta_i)}-\sin{(\theta_k+\theta_i)}.
    \end{align}
    Let $\phi = -\epsilon$, where $0<\epsilon<\min\{\theta_i,\theta_k\}$. For $\Delta_2(\phi)$:
    \begin{align}
        \Delta_2(-\epsilon) = \sin{(2\epsilon+\theta_k+\theta_i)}-\sin{(\theta_k+\theta_i)}.
    \end{align}
    Since $\theta_k+\theta_i<\pi$,  for small $\epsilon>0$, we use the Taylor expansion to obtain: 
    \begin{equation}     \sin{(2\epsilon+\theta_k+\theta_i)}\approx \sin{(\theta_k+\theta_i)}+2\epsilon\cos{(\theta_k+\theta_i)}.
    \end{equation}
    Thus, 
    \begin{equation}
         \Delta_2(-\epsilon)\approx2\epsilon\cos{(\theta_k+\theta_i)}>0,
    \end{equation}
    since $\cos{(\theta_k+\theta_i)}>0$ for $\theta_k+\theta_i<\pi$. Hence, $\Delta_2(-\epsilon)>0$, implying $\beta_2>\bar{\beta}_2$.

    For $\Delta_1(\phi)$, we define:
    \begin{align}
        a(\epsilon) &= \cos{(\epsilon+\theta_i)}-\cos{(\epsilon+\theta_k)},\nn\\ 
        b(\epsilon) &= \sqrt{\sin{(2\epsilon+\theta_k+\theta_i)}}.
    \end{align}
   Employing the Taylor expansion for small $\epsilon$ yields:
    \begin{align}
        a(\epsilon) &\approx (\cos{\theta_i}-\cos{\theta_k}) +(\sin{\theta_k}-\sin{\theta_i})\epsilon,\nn\\
        b(\epsilon) &\approx \sqrt{\sin{(\theta_k+\theta_i)}}+\frac{\cos{(\theta_k+\theta_i)}}{\sqrt{\sin{(\theta_k+\theta_i)}}}\epsilon.
    \end{align}
    Since $\theta_i<\theta_k$, $\sin{\theta_k}-\sin{\theta_i}>0$, so $ a(\epsilon)>\cos{\theta_i}-\cos{\theta_k}$. We then compute:
    \begin{align}
        &\Delta_1(-\epsilon)  \nn\\
        &\approx \frac{\cos{\theta_i}-\cos{\theta_k}+(\sin{\theta_k}-\sin{\theta_i})\epsilon}{\sqrt{\sin{(\theta_k+\theta_i)}}+\frac{\cos{(\theta_k+\theta_i)}}{\sqrt{\sin{(\theta_k+\theta_i)}}}\epsilon}-\frac{\cos{\theta_i}-\cos{\theta_k}}{\sqrt{\sin{(\theta_k+\theta_i)}}}.
            \end{align}
            For small $\epsilon$, the dominant term is:
        \begin{equation}
            \Delta_1(-\epsilon)\approx \frac{(\sin{\theta_k}-\sin{\theta_i})\epsilon}{\sqrt{\sin{(\theta_k+\theta_i)}}}>0,
        \end{equation}
         since $\sin{\theta_k}-\sin{\theta_i}>0$. Thus, $\Delta_1(-\epsilon)>0$, implying $\beta_1>\bar{\beta}_1$.        
For $\theta_i>\theta_k$, the signs of  $\sin{(\theta_k-\theta_i)}$ and $\cos{\theta_i}-\cos{\theta_k}$ reverse, but the analysis holds analogously by adjusting the signs in $\Delta_1(\phi)$ and $\Delta_2(\phi)$. The case $\theta_i,\theta_k\in(\frac{\pi}{2},\pi)$ is similar, leveraging the symmetry of trigonometric functions. Thus, the inequalities hold for all cases, which completes the proof.
    
\end{appendices}
\section{Acknowledgment}
The authors would like to thank the developers of Qwen, an advanced large language model, for providing valuable assistance during the language polish process.
\vspace{-9pt}
\bibliographystyle{IEEEtran}
\bibliography{Ref}
\end{document}